\crefname{ineq}{Inequality}{Inequalities}
\crefname{enumi}{}{}
\crefname{step}{Step}{Steps}
\crefname{lemma}{Lemma}{Lemmas}
\crefname{corollary}{Corollary}{Corollaries}
\crefname{proposition}{Proposition}{Propositions}
\crefname{property}{Property}{Properties}
\newtheorem{theorem}{Theorem}[section]
\newtheorem{lemma}[theorem]{Lemma}
\newtheorem{corollary}[theorem]{Corollary}
\newtheorem{remark}[theorem]{Remark}
\newcommand*{\OPT}{\ensuremath{\textsc{Opt}}\xspace}
\newcommand{\eps}{\varepsilon}
\newcommand{\ZZ}{\mathbb{Z}}
\newcommand{\machs}{\mathcal{M}}
\newcommand{\jobs}{\mathcal{J}}
\newcommand{\states}{\mathcal{S}}
\newcommand{\ijobs}{\mathcal{I}}
\newcommand{\fjobs}{\mathcal{F}}
\newcommand{\corejobs}{\bar{J}}
\newcommand{\fringejobs}{\tilde{J}}
\DeclarePairedDelimiter\floor{\lfloor}{\rfloor}
\DeclarePairedDelimiter\ceil{\lceil}{\rceil}
\DeclarePairedDelimiter\set{\lbrace}{\rbrace}
\DeclarePairedDelimiterX\sett[2]{\lbrace}{\rbrace}{ #1 \,\delimsize| \,\mathopen{} #2 }
\begin{document}

\title{Scheduling on (Un-)Related Machines\\ with Setup Times\thanks{This work was partially supported by the German Research Foundation (DFG)
within the Collaborative Research Centre ``On-The-Fly Computing'' (SFB 901) and by DFG project JA 612/20-1}
}
\author[1]{Klaus Jansen}
\author[1]{Marten Maack}
\author[2]{Alexander M\"acker}
\affil[1]{Department of Computer Science, University of Kiel, Kiel, Germany}
\affil[2]{Heinz Nixdorf Institute and Computer Science Department, Paderborn University, Paderborn, Germany}
\affil[1]{\textit {\{kj,mmaa\}@informatik.uni-kiel.de}}
\affil[2]{\textit {alexander.maecker@uni-paderborn.de}}

\date{}
\maketitle
\abstract{We consider a natural generalization of scheduling $n$ jobs on $m$ parallel machines so as to minimize the makespan.
In our extension the set of jobs is partitioned into several classes and a machine requires a setup whenever it switches from processing jobs of one class to jobs of a different class.
During such a setup, a machine cannot process jobs and the duration of a setup may depend on the machine as well as the class of the job to be processed next.

For this problem, we study approximation algorithms for non-identical machines.
We develop a polynomial-time approximation scheme for uniformly related machines.
For unrelated machines we obtain an $O(\log n + \log m)$-approximation, which we show to be optimal (up to constant factors) unless $NP \subset RP$.
We also identify two special cases that admit constant factor approximations.
}

\section{Introduction}
\label{sec:intro}
We consider a problem that is a natural generalization of the classical parallel machine scheduling problem:
We are given a set of $n$ jobs as well as $m$ parallel (identical, uniformly related or unrelated) machines and the goal is to find an assignment of jobs to machines so as to minimize the makespan.
Our generalization assumes the set of jobs to be partitioned into $K$ classes and a machine needs to perform a setup whenever it switches from processing a job of one class to a job of a different class.
Thereby, the length of the setup may depend on the machine as well as the class of the job to be processed, but does not depend on the class previously processed on the machine.
Such an explicit modeling of (sequence-(in)dependent) setup times has several applications/motivations: they occur in production systems, for example, as changeover times, times for cleaning activities or for preparations such as the calibration of tools; or in computer systems, for example, due to the transfer of data required to be present at the executing machine \cite{ali1,ali2,ali3}.

In the past, approximation algorithms for this problem have been designed for the case of identical machines \cite{wads15,europar16,martenConfigIP}, however, not much is known about the non-identical case.
The goal of this paper is to advance the understanding of the problem in case the machines have different capabilities, which we capture by modeling them as uniformly related or unrelated machines.
This seems to be an important topic as it is a natural special case of the following problem, which is quite present in the literature on heuristics and exact algorithms (cf.\ \cite{ali1,ali2,ali3}), but lacks (to the best of our knowledge) theoretical investigations with provable performance guarantees: 
Jobs need to be processed on parallel unrelated machines and each job has a setup time that might depend on the machine as well as the preceding job.
Note that in this paper we require the setup times to have a certain regular structure in the sense that it is $0$ for a job $j$ if $j$ is preceded by a job of the same class and otherwise it only depends on $j$'s class and the machine.

\subsection{Model \& Notation}
\label{sec:model}
We consider a scheduling problem that generalizes the classical problem of minimizing the makespan on parallel machines:
In our model, we are given a set $\mathcal{J}$ of $n \coloneqq |\mathcal{J}|$ jobs as well as a set $\mathcal{M}$ of $m \coloneqq |\mathcal{M}|$ parallel machines.
Each job $j \in \mathcal{J}$ comes with a processing time (size) $p_{ij} \in \mathbb{N}_{\geq 0}$ for each $i \in \mathcal{M}$.
Additionally, the set $\mathcal{J}$ of jobs is partitioned into $K$ classes $\mathcal{K}$.
Each job $j$ belongs to exactly one class $k_j \in \mathcal{K}$ and with each class $k \in \mathcal{K}$ and machine $i \in \mathcal{M}$ a setup time $s_{ik}\in \mathbb{N}_{\geq 0}$ is associated.
The goal is to compute a non-preemptive schedule in which each job is processed on one machine and each machine processes at most one job at a time and which minimizes the makespan:
A schedule is given by a mapping $\sigma: \mathcal{J} \to \mathcal{M}$ and the goal is to minimize (over all possible $\sigma$) the \emph{makespan} $\max_{i \in \mathcal{M}} L_i$ given by the maximum \emph{load} of the machines $L_i \coloneqq \sum_{j\in\sigma^{-1}(i)} p_{ij} + \sum_{k \in \{k_j : j \in \sigma^{-1}(i)\} }s_{ik}$.
Intuitively, one can think of the load of a machine as the processing it has to do according to the jobs assigned to it plus the setups it has to pay for classes of which it does process jobs.
This reflects problems where a machine $i$ processes all jobs belonging to the same class in a batch (a contiguous time interval) and before switching from processing jobs of a class $k'$ to jobs of class $k$ it has to perform a setup taking $s_{ik}$ time.
For simplicity of notation, for a fixed problem instance and an algorithm $\mathcal{A}$, we denote the makespan of the schedule computed by $\mathcal{A}$ as $|\mathcal{A}|$.

In the most general model for parallel machines, the \emph{unrelated} machines case, there are no restrictions on the processing times $p_{ij}$, which therefore can be completely arbitrary.
In case of \emph{uniformly related} machines, each machine $i$ has a fixed speed $v_i$ and the processing time $p_{ij}$ only depends on the job $j$ and the speed of machine $i$ and is given by $p_{ij} = \frac{p_j}{v_i}$.
Finally, we consider the \emph{restricted assignment} problem, where each job $j$ has a set $M_j$ of eligible machines (on which it can be processed) and the processing time is the same on all of them, that is, $p_{ij} = p_j$ for all $i \in M_j$ and $p_{ij}=\infty$ otherwise.

For each of these variants we assume that the setup times behave similar to the jobs, that is, in the unrelated case we have arbitrary setup times $s_{ik}$ depending on the machine $i$ and the class $k$; in the uniform case we have, $s_{ik} = \frac{s_k}{v_i}$; and in the restricted assignment case, we have $s_{ik}\in\set{s_k,\infty}$.
This model seems sensible, if we assume that the different behavior is due to qualitative differences between the machines, like suggested by the names of the problems.  

\subsubsection{Further Notions}
\label{sec:notions}
A polynomial time (approximation) algorithm $\mathcal{A}$ is called to have an \emph{approximation factor} $\alpha$ if, on any instance, 
$|\mathcal{A}| \leq \alpha |\OPT|$ holds, where $|\OPT|$ denotes the optimal makespan.
In case $\mathcal{A}$ is a randomized algorithm, we require that $\mathbb{E}[|\mathcal{A}|] \leq \alpha |\OPT|$, where the expectation is taken with respect to the random choices of $\mathcal{A}$.
An approximation algorithm is called a \emph{polynomial time approximation scheme} (PTAS) if, for any $\varepsilon >0$, it computes a $(1+\varepsilon)$-approximation in time polynomial in the input size and (potentially) exponential in $\frac{1}{\varepsilon}$.

Our approximation algorithms almost all follow the idea of the \emph{dual approximation framework} introduced by Hochbaum and Shmoys in \cite{dualApprox}.
Instead of directly optimizing the makespan, we assume that we are given a bound $T$ on the makespan and we are looking for an algorithm that computes a schedule with makespan (at most) $\alpha T$ or correctly decides that no schedule with makespan $T$ exists.
Employing this idea, it is easy to see that using binary search started on an interval $I \ni  |\OPT|$ that contains the optimal makespan, finally provides an approximation algorithm with approximation factor $\alpha$.

\subsection{Related Work}
\label{sec:relatedWork}

\emph{Uniformly Related Machines. }
As already discussed, our model can be viewed as a generalization of classical parallel machine models without setup times (where all setup times are $0$).
For these models, it is known for a long time due to the work of Hochbaum and Shmoys \cite{ptasUniform} that a PTAS can solve the problem of uniformly related machines arbitrarily close to optimal.
More recently, in \cite{eptasUniform} Jansen even shows that the running time can be further improved by coming up with an EPTAS, a PTAS with running time of the form $f(1/\varepsilon) \times \text{poly}(|I|)$, where $f$ is some computable function and $|I|$ the input size.

\emph{Unrelated Machines and Restricted Assignment. } The case of unrelated machines significantly differs from the uniform case due to an inapproximability result of $3/2$ (unless P$=$NP) as proven by Lenstra, Shmoys and Tardos in \cite{lenstraUnrelated}. 
On the positive side, there are algorithms that provide $2$-approximations based on rounding fractional solutions to a linear programming formulation of the problem.
A purely combinatorial approach with the same approximation factor is also known \cite{martin}.
For special cases of the restricted assignment problem stronger results are known, e.g.,  Ebenlendr et al.\ \cite{graphBalancing} show that the lower bound of $3/2$ even holds for the more restrictive case where $|M(j)| \leq 2$ for all $j$, and design a $1.75$-approximation algorithm for this case.
For the general restricted assignment case, Svensson \cite{RAestimation} provides an algorithm for estimating the optimal makespan within a factor of $\frac{33}{17}$.
Jansen and Rohwedder \cite{RAbetterestimation} improve this to $\frac{11}{6}$ and also \cite{RAQuasipoly} give an algorithm with quasipolynomial running time and approximation ratio $\frac{11}{6} + \eps$.

\emph{Setup Times. } Scheduling with an explicit modeling of setup times has a long history, particularly within the community of operations research.
The vast majority of work there studies hardness results, heuristics and exact algorithms, which are evaluated through simulations, but without formal performance guarantees.
The interested reader is referred to the exhaustive surveys on these topics by Allahverdi et al.\ \cite{ali1,ali2,ali3}.
In contrast, literature in the domain of approximation algorithms with proven bounds on the performance is much more scarce.
Schuurman and Woeginger \cite{preemptiveSetups}, consider a model where jobs are to be processed on identical machines in a preemptive way so as to minimize the makespan.
Whenever a machine switches from processing one job to a different job, a setup time is necessary.
Schuurman and Woeginger design a PTAS for the case of job-indepedent setup times and a $4/3$-approximation for the case of job-dependent setup times.
In \cite{correa15}, Correa et al.\ consider a similar model where jobs can not only be preempted but be split arbitrarily (thus, job parts can also be processed simultaneously on different machines). 
They design a $(1+\phi)$-approximation, where $\phi \approx 1.618$ is the golden ratio, for the case of unrelated machines as well as an inapproximability result of $\frac{e}{e-1}$.
The model with classes and (class-independent) setups was first considered by M\"acker et al.\ for identical machines in \cite{wads15}, where constant factor approximations are presented.
In \cite{europar16}, Jansen and Land improve upon these results by providing a PTAS (even) for the case of class-dependent setup times.
This result has been further improved in \cite{martenConfigIP} to an EPTAS.
The same work \cite{martenConfigIP} also improves on the result from \cite{preemptiveSetups} (mentioned above) by giving an EPTAS for the respective problem and obtains an EPTAS for the identical machines case of the model given in \cite{correa15} (discussed above).

Our model with classes and setup times has also been considered for objective functions other than makespan by Divakaran and Saks for a single machine.
In \cite{saksCompletionTime}, they give a $2$-approximation for the weighted completion time objective and an algorithm achieving a maximum lateness that is at most the maximum lateness of an optimal solution for a machine running at half the speed.
In \cite{saksFlowtime}, they design and analyze an (online) algorithm  having a constant approximation factor for minimizing the maximum flow time.
In \cite{splittingVsSetups}, Correa et al.\ study the objective of minimizing the weighted completion time in the setting where jobs can be split arbitrarily and each part requires a setup before being processed.
They propose constant factor approximations for identical and unrelated machines.

\subsection{Our Results}
\label{sec:results}
In \cref{sec:uniform}, we present the first PTAS for scheduling on uniformly related machines with setup times. 
Roughly speaking, our main technical contribution, is to simplify the problem, such that for each setup class the setup times can be ignored on all machines but those whose speeds belong to some bounded (class dependent) interval of machine speeds.
In \cref{sec:unrelated} we study the case of unrelated machines and start with a randomized rounding based algorithm to compute $O(\log n + \log m)$-approximations in \cref{sec:unrelatedApprox}.
We prove that this bound is (asymptotically) tight (unless $NP=RP$) by providing a randomized reduction from the \textsc{SetCover} problem in \cref{sec:unrelatedReduction}.
We conclude in \cref{sec:specialUnrelated} with identifying two special cases of unrelated machines that admit constant factor approximations by showing how a rounding technique from \cite{correa15} can be employed to approximate these cases.

\section{Uniformly Related Machines}
\label{sec:uniform}
In this section, we develop a PTAS for uniformly related machines based on a dual approximation.
To bootstrap the dual approximation framework, that is, to determine a (small) interval containing $|\textsc{Opt}|$, we could, with a very efficient runtime (dominated by time for sorting) of $O(n \log n)$, compute a constant factor approximation based on the standard LPT-rule as follows:
Let $J_s^k = \{j \in \mathcal{J} : k_j = k, p_j < s_k\}$ be the set of jobs of class $k$ being smaller than the setup time of $k$.
Replace the jobs in $J_s^k$ by $\lceil \sum_{j \in J_s^k}p_j/s_k\rceil$ many (placeholder) jobs of class $k$, each with a size of $s_k$.
Then apply the standard LPT-rule ignoring any classes and setups (that is, sort all jobs by non-increasing size and add one after the other to the machine where it finishes first); and finally, add all required setups to the LPT-schedule and replace the placeholder by the actual jobs.
As LPT provides $(1+\frac{1}{\sqrt{3}})$-approximations for scheduling on uniformly related machines \cite{lpt}, a straightforward reasoning shows this approach to provide $3(1+\frac{1}{\sqrt{3}}) \approx 4.74$-approximations.

\begin{lemma}
Using the LPT-rule as described above, yields an approximation factor of $3(1+\frac{1}{\sqrt{3}}) \approx 4.74$.
\end{lemma}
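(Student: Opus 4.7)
I would relate the algorithm's output to $|\OPT|$ through the auxiliary $Q||\Cmax$-instance $I'$ obtained from $I$ by replacing each family $J_s^k$ of small jobs by $N_k := \lceil \sum_{j \in J_s^k} p_j / s_k \rceil$ placeholders of size $s_k$. Three steps suffice: (i) show $|\OPT(I')| \leq |\OPT|$; (ii) invoke the uniform-machines LPT guarantee $(1+1/\sqrt{3})$ on $I'$; (iii) bound the overhead incurred when the LPT schedule for $I'$ is converted back to a schedule for $I$ (small jobs are re-introduced and setups are charged) by a factor $3$ per machine. Multiplying, the approximation ratio is $3(1+1/\sqrt{3})$.

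\textbf{Step (i).}
Starting from an optimal schedule for $I$, for every class $k$ and every machine $i$ on which the small jobs of $k$ have total size $P_{i,k}$, replace those jobs by $\lceil P_{i,k}/s_k \rceil$ placeholders. Their combined processing time on $i$ is $\lceil P_{i,k}/s_k \rceil \cdot s_k/v_i \leq P_{i,k}/v_i + s_{ik}$, i.e., the small jobs' own processing time plus the class-$k$ setup $\OPT$ already pays; thus no machine's load increases. The counts $\lceil P_{i,k}/s_k \rceil$ sum to at least $N_k$, so trimming to exactly $N_k$ placeholders per class (which only decreases loads) gives a valid schedule for $I'$ of makespan at most $|\OPT|$.

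\textbf{Steps (ii) and (iii).}
LPT on $I'$ produces a schedule whose load on machine $i$ decomposes as $L_i + P_i$, the contributions of original jobs and of placeholders respectively, with $L_i + P_i \leq (1+1/\sqrt{3})|\OPT|$ by step (i) and the LPT bound. To reconstruct a schedule for $I$, I would redistribute the small jobs of each class $k$ only to machines carrying at least one class-$k$ placeholder, in proportion to the counts $n_{ki}$. Since every small job has size strictly less than $s_k$, a standard LP-style rounding guarantees that each such machine receives class-$k$ small jobs of total size at most $(n_{ki}+1)s_k \leq 2 n_{ki} s_k$, hence processing time at most $2 n_{ki} s_{ik}$ and in sum at most $2 P_i$. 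The setups split naturally: for each class $k$ with a large job on $i$, the setup $s_{ik}$ is dominated by that large job's processing time (since $p_j \geq s_k$), so those setups together contribute $\leq L_i$; for each remaining class $k$ on $i$ there is at least one class-$k$ placeholder, so its setup $s_{ik}$ is at most the placeholder's time, contributing in total $\leq P_i$. Combining, the final load on $i$ is at most $L_i + 2 P_i + L_i + P_i = 2L_i + 3P_i \leq 3(L_i + P_i) \leq 3(1+1/\sqrt{3})|\OPT|$, as claimed.

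\textbf{Main obstacle.}
The most delicate point is the redistribution of the small jobs, because the inequality $(n_{ki}+1)s_k \leq 2 n_{ki} s_k$ only holds when $n_{ki} \geq 1$; we therefore have to ensure that small jobs of class $k$ never land on a machine with no class-$k$ placeholder. The existence of such an assignment follows since the total placeholder capacity $N_k s_k$ always covers the total small-job size of class $k$, and it can be produced by rounding the natural fractional solution that spreads class-$k$ small jobs across the machines proportionally to $n_{ki}$.
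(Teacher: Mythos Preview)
Your proposal is correct and follows essentially the same three-step skeleton as the paper: bound the optimum of the placeholder instance $I'$ by $|\OPT|$, invoke the $(1+1/\sqrt{3})$ LPT guarantee, and then charge the cost of undoing the transformation by a factor~$3$. Your Step~(i) is in fact more carefully argued than the paper's (you explicitly use subadditivity of the ceiling to trim excess placeholders), and your redistribution argument in Step~(iii) is sound.

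The only real difference is bookkeeping in Step~(iii). The paper observes directly that every class $k$ present on machine $i$ in the LPT schedule contributes at least one job of size $\ge s_k$ to that machine's load, so $\sum_{k\in C_i} s_{ik}$ is bounded by the LPT load itself; it then charges both the placeholder-to-small-job swap and the added setups against this quantity, getting $3$ times the LPT load in one line. You instead split the LPT load as $L_i+P_i$ and obtain the slightly sharper intermediate bound $2L_i+3P_i$ before relaxing to $3(L_i+P_i)$. Both routes yield the same constant; the paper's is shorter, yours gives a marginally finer inequality that is not exploited further.
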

\begin{proof}
Consider an optimal schedule and let $S_i$ be the set of classes for which there is a setup on machine $i$.
Then, there is a schedule with load at most $|\textsc{Opt}|+\sum_{k \in S_i} s_{ik}$ on machine $i$ after replacing the small jobs by placeholder jobs.
Also, when ignoring any setups, this load is decreased to at most $|\textsc{Opt}|$.
Therefore, using LPT, we find a schedule with makespan at most $(1+\frac{1}{\sqrt{3}})|\textsc{Opt}|$.
Now let $C_i$ be the set of classes of which jobs are scheduled on machine $i$ in the LPT schedule.
Replacing the placeholder by actual jobs can increase the makespan by at most $\sum_{k \in C_i} s_{ik}$ and adding the required setups can increase it by the same amount.
Since $\sum_{k \in C_i} s_{ik} \leq (1+\frac{1}{\sqrt{3}})|\textsc{Opt}|$, the lemma follows.
\end{proof}

\subsection{PTAS}

The roadmap for the PTAS is as follows:
\begin{enumerate}
\item Simplify the instance.
\item Find a relaxed schedule for the simplified instance via dynamic programming, or conclude correctly that no schedule with makespan $T$ for the original instance exists.
\item Construct a regular schedule for the simplified instance using the relaxed schedule and a greedy procedure.
\item Construct a schedule for the original instance using the one for the simplified instance.
\end{enumerate}

Concerning the second and third step, first note that the makespan guess $T$, given by the dual approximation framework, enables a packing perspective on the problem:
On machine $i$ there is an amount of $Tv_i$ free space and the jobs and setup times have to be placed into this free space.
Now, a job or setup time may be big or small relative to this free space, say bigger or smaller than $\eps T v_i$.
In the latter case, $i$ can receive one additional job or setup time in a PTAS, or several for another threshold parameter than $\eps$.
Hence, we have to be cautious when placing big objects but can treat small objects with less care. 
Roughly speaking, in a relaxed schedule some jobs and setups are fractionally placed on machines for which they are small, and for jobs that are big relative to the setup time of their class, the setup is ignored.

For the dynamic program, we define intervals of machine speeds, called groups, and the groups are considered one after another ordered by speeds and starting with the slowest.
In each interval, the speeds differ at most by a constant factor.
This enables us to reuse ideas for the identical machine case developed in \cite{europar16} for the single groups.
However, there has to be some information passed up from one group to the next, and this has to be properly bounded, in order to bound the running time of the dynamic program.
While, we can use some standard ideas for classical makespan minimization on uniformly related machines (without setup times), e.g. from \cite{ptasUniform}, there are problems arising from the setup classes.
Mainly, we have to avoid passing on class information between the groups.
As a crucial step to overcome this problem, we show that for each group there is only a bounded interval of machine speeds for which we have to properly place the setup times.
In the algorithm, we define the groups wide enough and with overlap such that for each class there is a group containing the whole interval relevant for this class.
When going from one group to the next, we therefore do not have to pass on class information of jobs that have not been scheduled yet.
This, together with proper simplification steps enables us to properly bound the running time of the dynamic program.

In the following, we describe the PTAS in detail, starting with the simplification steps, followed by some definitions and observations that lead to the definition of a relaxed schedule, and lastly, we present the dynamic program.

Throughout this section $\eps > 0$ denotes the accuracy parameter of the PTAS with $1/\eps \in \ZZ_{\geq 2}$; and $\log(\cdot)$ the logarithm with basis 2. 
Furthermore, for a job $j$ or a setup class $k$, we call the values $p_j$ and $s_k$ the job or setup size respectively, in distinction from their processing time $p_{ij} = p_j/v_i$ or setup time $s_k/v_i$ on a given machine $i$.

\paragraph{Simplification Steps.}

We perform a series of simplification steps:
First, we establish minimum sizes of the occurring speeds, job and setup sizes; next, we ensure that the job sizes of a class are not much smaller than its setup size; and lastly, we round the speeds, job and setup sizes. 
Most of the used techniques, like geometric rounding or the replacement of small objects with placeholders with a minimum size, can be considered folklore in the design of approximation algorithms for scheduling problems.
Similar arguments can be found, e.g., in \cite{europar16}, \cite{ptasUniform}, \cite{eptasUniform} or \cite{epstein2004approximation}.

Let $I$ be the original instance and $v_{\max} = \max\sett{v_i}{i\in\machs}$.
We remove all machines with speeds smaller than $\eps v_{\max}/m$ and denote the smallest remaining speed after this step by $v_{\min}$.
Furthermore, we increase all job and setup sizes that are smaller than $\eps v_{\min}T/(n+K)$ to this value, and call the resulting instance $I_1$.
By scaling, we assume $v_{\min}T = 1$ in the following.
\begin{lemma}
If there is a schedule with makespan $T$ for $I$, there is also a schedule with makespan $(1+\eps)^2T$ for $I_1$; and if there is a schedule with makespan $T'$ for $I_1$, there is also a schedule with makespan $T'$ for $I$.
\end{lemma}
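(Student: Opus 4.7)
The plan is to analyze the two simplification steps separately and argue that each contributes at most a factor of $(1+\eps)$ to the makespan when moving from $I$ to $I_1$, while the reverse direction is essentially free.

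For the forward direction, I would start from an optimal (makespan-$T$) schedule for $I$ and first handle the removal of slow machines. All removed machines have speed at most $\eps v_{\max}/m$, so each one can process at most $\eps v_{\max} T / m$ total work (jobs plus setups) in time $T$. Reassigning everything scheduled on these machines onto the fastest machine (of speed $v_{\max}$) adds at most $m\cdot (\eps v_{\max} T/m)/v_{\max}=\eps T$ to the load on that machine; in the worst case it also forces up to $K$ new setups there, but these were already counted as part of the reassigned work. This yields a schedule for the machines surviving the first step with makespan at most $(1+\eps)T$.

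Next, for the inflation of small job and setup sizes to $\eps v_{\min} T/(n+K)$, observe that at most $n+K$ objects are modified and each has its size raised by at most $\eps v_{\min} T/(n+K)$. On any remaining machine $i$, the total added processing/setup time is at most $(n+K)\cdot \eps v_{\min}T/((n+K)v_i)=\eps v_{\min}T/v_i\le \eps T$, since $v_i\ge v_{\min}$. Applying this on top of the previous schedule gives a valid schedule for $I_1$ with makespan at most $(1+\eps)^2 T$, as claimed. The reverse direction is immediate: a schedule for $I_1$ uses only a subset of the machines of $I$ and assigns jobs exactly as before, and since all sizes in $I_1$ dominate those in $I$, the same assignment attains makespan at most $T'$ on the original instance.

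The only mild subtlety will be to make sure that when the slow machines are absorbed into the fastest machine, the bookkeeping for setup times is done correctly — the reassigned work on a slow machine $i$ already includes $\sum_{k\in S_i} s_{ik}/v_i$ worth of setup, and the corresponding setups on the fast machine take time $s_{ik}/v_{\max} \le s_{ik}/v_i$, so the bound still goes through. No further delicate argument is needed.
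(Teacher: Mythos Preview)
Your proposal is correct and follows essentially the same approach as the paper's proof: move the work from the deleted slow machines to a fastest machine (total size at most $\eps v_{\max}T$, hence at most $\eps T$ extra time there), and then bound the per-machine effect of inflating the small sizes by $\eps v_{\min}T/v_i\le\eps T$. The paper's proof is simply a two-sentence version of exactly this argument; your additional remark about setup bookkeeping on the fast machine is a valid clarification of the same step.
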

\begin{proof}
Given a schedule for $I$, the summed up load on machines missing in $I_1$ is upper bounded by $\eps v_{\max}T$ and we can place it on a fastest machine.
Furthermore, increasing the setup and processing times can increase the load on any machine by at most $\eps v_{\min}T$.
\end{proof}

The next step is to make sure that jobs are not much smaller than the setup size of their class.
Let $I_2$ be the instance we get by replacing for each class $k$ the jobs with size smaller than $\eps s_k$ with placeholders, that is, we remove the jobs from $\jobs'_{k} = \sett{j\in\jobs_k}{p_j\leq \eps s_k}$ and introduce $\ceil{(\sum_{j\in\jobs'_{k}}p_j)/(\eps s_k)}$ many jobs of size $\eps s_k$ belonging to class $k$.
\begin{lemma}
If there is a schedule with makespan $T'$ for $I_1$, there is one with makespan $(1+\eps)T'$ for $I_2$; and if there is a schedule with makespan $T'$ for $I_2$, there is one with makespan $(1+\eps)T'$ for $I_2$.
\end{lemma}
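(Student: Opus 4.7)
The two directions both consist of replacing, on each machine, a collection of small class-$k$ jobs by an appropriate number of size-$\eps s_k$ placeholders of the same class (or vice versa) and bounding the per-class load change on each machine by $\eps s_k/v_i$. Summing these contributions and using that on any machine $i$ the setup times in the given schedule satisfy $\sum_{k:\, i \text{ uses } k} s_k/v_i \leq T'$ then yields the desired $\eps T'$ slack.

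For the direction from $I_1$ to $I_2$, I would start with a schedule $\sigma$ of makespan $T'$ for $I_1$ and, for each class $k$ and machine $i$, set $P_{ik} = \sum_{j \in \jobs'_k \cap \sigma^{-1}(i)} p_j$. The task is to pick counts $c_{ik}$ of class-$k$ placeholders on machine $i$ such that $\sum_i c_{ik}$ equals the prescribed number $N_k = \lceil \sum_{j \in \jobs'_k} p_j/(\eps s_k)\rceil$, while $c_{ik}\,\eps s_k$ stays close to $P_{ik}$. Initializing $c_{ik} = \lceil P_{ik}/(\eps s_k)\rceil$, a short accounting via floors and fractional parts shows that $\sum_i c_{ik} - N_k$ is at most the number $|R|$ of machines on which $P_{ik}/(\eps s_k)$ is fractional; on each such machine we may instead take $c_{ik} = \lfloor P_{ik}/(\eps s_k)\rfloor$, which only decreases the load. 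Performing exactly the needed number of such drops balances the total. The class-$k$ load on any machine then grows by at most $\eps s_k/v_i$, and since placeholders of class $k$ live only on machines that already processed class $k$ in $\sigma$, no new setups are introduced.

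For the reverse direction, given a schedule of makespan $T'$ for $I_2$ with $c_{ik}$ placeholders of class $k$ on machine $i$, I would redistribute the jobs of $\jobs'_k$ by a greedy rule: fix an arbitrary order of $\jobs'_k$, and for the machines with $c_{ik}>0$ (in some order) assign successive jobs until the cumulative assigned size exceeds $c_{ik}\,\eps s_k$, then proceed to the next machine. Because every small job has size at most $\eps s_k$, the overflow on each machine is at most $\eps s_k$; and because the total placeholder capacity $N_k\,\eps s_k$ is at least $\sum_{j\in\jobs'_k} p_j$, every job is placed. Replacing the placeholders of class $k$ on machine $i$ by its allotted small jobs therefore raises the load by at most $\eps s_k/v_i$ per class, and the setup for $k$ on $i$ was already paid for.

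The main technical obstacle is the counting argument in the first direction: naively rounding $P_{ik}/(\eps s_k)$ up on every machine creates too many placeholders, and rounding down too few, so some care is needed to make the machine-wise counts sum to exactly $N_k$. Once this floor/ceiling bookkeeping is settled, both directions reduce to the single observation that the per-machine load increase, summed over classes, is at most an $\eps$ fraction of the total setup load, which itself is bounded by $T'$.
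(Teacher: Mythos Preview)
Your proposal is correct and takes essentially the same approach as the paper: greedy per-machine replacement of the small class-$k$ jobs by placeholders (and vice versa), overpacking by at most one object of size at most $\eps s_k$ per class and machine. The paper's proof is a one-sentence version of what you have spelled out in detail; the only stylistic difference is in the last step, where the paper observes directly that the load \emph{due to class $k$} on machine $i$ already contains the setup $s_k/v_i$, so an additive $\eps s_k/v_i$ overpack increases that class's contribution by at most a factor $(1+\eps)$, hence the total load by at most $(1+\eps)$---this sidesteps your separate summation $\sum_k \eps s_k/v_i \leq \eps T'$, but the two arguments are equivalent.
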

\begin{proof}
Given a schedule for one of the instances, we can greedily replace jobs with the respective placeholders and vice-versa, over-packing with at most one object per class and machine.
Thereby the overall load on each machine due to a class scheduled on the machine is increased at most by a factor of $(1+\eps)$.
\end{proof}

Next, we perform rounding steps for the job and setup sizes, as well as the machine speeds:
For each job or setup size $t$, let $e(t) = \floor{\log t}$.
We round $t$ to $2^{e(t)} + k\eps 2^{e(t)}$ with $k = \ceil{(t-2^{e(t)})/(\eps 2^{e(t)})}$.
This rounding approach is due to Gálvez et al. \cite{Verschae16}.
Furthermore, we perform geometric rounding for machine speed, that is, each machine speed $v$ is rounded to $(1 + \eps)^{k'}v_{\min}$, with $k' = \floor{\log_{1+\eps}(v_i/v_{\min})}$.
We call the rounded instance $I_3$.
\begin{lemma}
If there is a schedule with makespan $T'$ for $I_2$, there is also a schedule with makespan $(1+\eps)^2T'$ for $I_3$; and if there is a schedule with makespan $T'$ for $I_3$, there is also one for $I_2$.
\end{lemma}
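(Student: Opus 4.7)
The plan is a routine rounding argument: first analyze how much each of the three rounding operations perturbs its quantity, then argue that the same assignment $\sigma$ carries over in both directions with only a multiplicative loss on the load of every machine. Since the load on machine $i$ equals $\sum_{j\in\sigma^{-1}(i)} p_j/v_i + \sum_{k\in\{k_j : j\in\sigma^{-1}(i)\}} s_k/v_i$, it suffices to bound the ratio between the rounded and unrounded values of the $p_j$, $s_k$ and $v_i$ separately and compose the bounds.

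First I would pin down the size-rounding bound. For a job or setup size $t$, the rounded value is $2^{e(t)}+k\eps 2^{e(t)}$ where $k=\lceil (t-2^{e(t)})/(\eps 2^{e(t)})\rceil$. From $k\eps 2^{e(t)} \geq t-2^{e(t)}$ the rounded value is $\geq t$, and from $k\eps 2^{e(t)} \leq (t-2^{e(t)})+\eps 2^{e(t)}$ it is $\leq t+\eps 2^{e(t)} \leq (1+\eps)t$, using $2^{e(t)} \leq t$. So every $p_j$ and $s_k$ is scaled up by a factor in $[1,1+\eps]$. For the speeds, $k' = \lfloor \log_{1+\eps}(v_i/v_{\min})\rfloor$ gives a rounded speed in $[v_i/(1+\eps), v_i]$, i.e., each $v_i$ is scaled down by a factor in $[1,1+\eps]$.

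For the first direction, given a schedule $\sigma$ with makespan $T'$ for $I_2$, reuse the same $\sigma$ for $I_3$. Every term $p_j/v_i$ (respectively $s_k/v_i$) gets its numerator multiplied by at most $(1+\eps)$ and its denominator divided by at most $(1+\eps)$, so each term grows by a factor of at most $(1+\eps)^2$. Summing over the jobs and classes on machine $i$, the load is at most $(1+\eps)^2T'$, as required. For the converse, take $\sigma$ with makespan $T'$ for $I_3$ and use it in $I_2$; the numerators weakly decrease and the denominators weakly increase, so every load weakly decreases, giving makespan at most $T'$ in $I_2$.

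There is no real obstacle here; the only subtle point to check carefully is the $(1+\eps)$ upper bound on the Gálvez-et-al.\ size rounding, which relies on $2^{e(t)}\leq t$ so that the additive slack $\eps 2^{e(t)}$ is indeed at most $\eps t$. Once that is established, the two directions follow by applying the same assignment and composing the per-quantity bounds on a machine-by-machine basis, without any need to reshuffle jobs.
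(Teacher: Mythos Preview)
Your proposal is correct and follows essentially the same approach as the paper: the paper's proof is a two-line statement that each size is increased by at most a factor $(1+\eps)$ and each speed decreased by at most a factor $(1+\eps)$, and you have simply unpacked the arithmetic behind those two claims and made explicit that the same assignment is reused. There is no substantive difference.
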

\begin{proof}
Each job and setup size is increased at most by a factor of $(1+\eps)$ by the rounding and each machine speed is decreased at most by a factor of $(1+\eps)$.
\end{proof}

Hence, if there is a schedule with makespan at most $T$ for $I$ there is also a schedule with makespan at most $T_1$ for $I_3$ with $T_1 = (1+\eps)^5T = (1+O(\eps))T$.
Furthermore, if we should find a schedule with makespan $T_2$ for $I_4$ with $T_1\leq T_2 = (1+O(\eps))T$, we can transform it back into a schedule for the original instance with makespan at most $T_3 = (1+\eps)T_2 = (1+O(\eps))T$.

For the sake of simplicity, we assume in the following that the instance $I$ is already simplified and the makespan bound $T$ was properly increased.

\paragraph{Preliminaries.}

We define two threshold parameters $\delta = \eps^2$ and $\gamma = \eps^3$.
For each class $k$ the \emph{core jobs} belonging to that class are the ones with a job size $p$, such that $\eps s_k \leq p < s_k /\delta$.
Bigger jobs are called \emph{fringe jobs}.
The set of core or fringe jobs of class $k$ is denoted by $\corejobs_k$ and $\fringejobs_k$ respectively.
The \emph{core machines} $i$ of class $k$, are the ones with $s_k \leq Tv_i < s_k /\gamma$ and faster machines are called \emph{fringe machines}.
\begin{remark}
For each class $k$ and each job $j$ that belongs to $k$, $j$ is either a core or a fringe job and has to be scheduled either on a core or a fringe machine of $k$. 
\end{remark}

A job size $p$ is called \emph{small} for a speed $v$, if $p < \eps vT$; \emph{big}, if $\eps vT \leq p \leq vT$; and \emph{huge}, if $p > vT$.
We use these terms for jobs and machines as well, e.g., we call a job $j$ small for machine $i$, if $p_j < \eps v_i T$.
Since $\gamma/\delta = \eps$ holds, we have:
\begin{remark}
The core jobs of class $k$ are small on fringe machines of $k$. 
\end{remark}

Next, we define speed groups (see Fig. \ref{fig:groups}).
For each $g\in \ZZ$, we set $\check{v}_g = v_{\min}/\gamma^{g-1}$ and $\hat{v}_g = v_{\min}/\gamma^{g+1}$.
Group $g$ is given by the interval $[\check{v}_g, \hat{v}_g)$.
Note that the groups are overlapping with each speed occurring in exactly two groups.
A machine $i$ belongs to group $g$, if $v_i\in[\check{v}_g, \hat{v}_g)$, and we denote the set of machines belonging to $g$ by $M_g$ and the set of corresponding speeds by $V_g$, i.e., $V_g = \sett{v_i}{i\in M_g}$.
By definition, the smallest group $g$ with $M_g \neq \emptyset$ is group $0$.
Furthermore, let $G$ be the biggest number with this property.
Because of the first simplification step, we have $G\leq m/(3\eps\log(1/\eps)) = O(m/\eps)$.

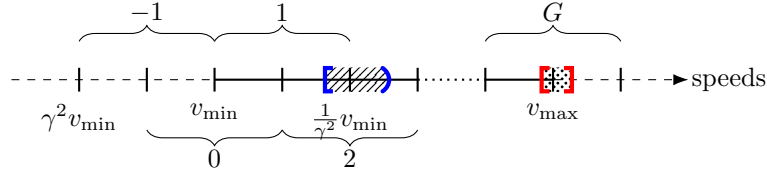
\begin{figure}
	\centering
	\begin{tikzpicture}[scale=9]
	\draw[dashed] (0,0) -- (0.3,0);
	\draw[thick] (0.3,0) -- (0.6,0);
	\draw[dotted, thick] (0.6,0) -- (0.7,0);
	\draw[thick] (0.7,0) -- (0.8,0);
	\draw[-{Latex[length=2mm]},dashed] (0.8,0) -- (1,0);
	\foreach \x/\xtext in {0.1/$\gamma^2 v_{\min}$,0.2/,0.3/$v_{\min}$,0.4/,0.5/$\frac{1}{\gamma^2}v_{\min}$,0.6/,0.7/,0.8/$v_{\max}$,0.9/}
	\draw[thick] (\x,0.5pt) -- (\x,-0.5pt) node[below,yshift=-1pt] {\xtext};
	\draw[decorate,decoration={brace,amplitude=6pt,raise=1pt},yshift=1.5pt] (0.1,0) -- (0.3,0) node [midway,yshift=12pt]{$-1$};
	\draw[decorate,decoration={brace,amplitude=6pt,raise=1pt},yshift=1.5pt] (0.3,0) -- (0.5,0) node [midway,yshift=12pt]{$1$};
	\draw[decorate,decoration={brace,amplitude=6pt,raise=1pt},yshift=1.5pt] (0.7,0) -- (0.9,0) node [midway,yshift=12pt]{$G$};
	\draw[decorate,decoration={brace,amplitude=6pt,raise=1pt,mirror},yshift=-2pt] (0.2,0) -- (0.4,0) node [midway,yshift=-12pt]{$0$};
	\draw[decorate,decoration={brace,amplitude=6pt,raise=1pt,mirror},yshift=-2pt] (0.4,0) -- (0.6,0) node [midway,yshift=-12pt]{$2$};
	
	\fill[pattern = north east lines,rounded corners=1ex] (0.461, -0.12ex) -- (0.559, -.12ex) -- (0.559, .12ex) -- (0.461,.12ex) -- cycle;
	\draw[[-, ultra thick, blue] (0.46,0) -- (0.461,0);
	\draw[-), ultra thick, blue] (0.56,0) -- (0.561,0);
	
	\fill[pattern = crosshatch dots,rounded corners=1ex] (0.781, -0.12ex) -- (0.829, -.12ex) -- (0.829, .12ex) -- (0.781,.12ex) -- cycle;
	\draw[[-, ultra thick, red] (0.78,0) -- (0.781,0);
	\draw[{-]}, ultra thick, red] (0.83,0) -- (0.831,0);
	
	\draw (1.06,0) node {speeds};
	\end{tikzpicture}
	\caption{Machine speeds with logarithmic scale. The braces mark groups; the dashed interval, possible speeds of core machines of some class with core group $2$; and the dotted interval, possible speeds of machines where some job with native group $G$ is big.}
	\label{fig:groups}
\end{figure}

For each job $j$ there are up to three (succeeding) groups containing speeds for which its size is big, and at least one of them contains all such speeds. 
Let $g$ be the smallest group with this property, i.e., $p_j \geq \eps\check{v}_g T$ and $p_j < \hat{v}_g T$.
We call $g$ the \emph{native group} of $j$.
For a group $g$, the fringe jobs with native group $g$ will be of interest in the following and we denote the set of these jobs by $\fringejobs_g$.

Moreover, for each class $k$ there are at most three (succeeding) groups containing possible speeds of core machines of $k$, and there is at least one that contains all of them.
Let $g$ be the smallest group with this property, i.e., $s_k \geq \check{v}_g T$ and $s_k < \hat{v}_g T$.
We say that $g$ is the \emph{core group} of $k$.
Note that $k$ has a core group even if it has no core machines.
\begin{remark}
Let $j$ be a core job of class $k$ and $g$ be the core group of $k$.
There is a speed $v$ in group $g$ such that $p_j$ is big for $v$.
\end{remark}
We have $p_j < s_k/\eps^2$ because $j$ is a core job; and $s_k/\eps^2<\eps\hat{v}_gT$, because $g$ is the core group of $k$.
Hence, $p_j$ is small for $\hat{v}_g$.
Furthermore, we have $p_j \geq \eps s_k \geq \eps \check{v}_g T$ for the same reasons.
Therefore, $p_j$ is big or huge for $\check{v}_g$ and there lies at least one speed in between for which it is big.

\paragraph{Relaxed Schedule.}

In a \emph{relaxed schedule}, the set of jobs is partitioned into integral jobs $\ijobs$ and fractional jobs $\fjobs$, and an assignment $\sigma': \ijobs \rightarrow \machs$ of the integral jobs is given.
For each $j\in\ijobs$ the machine $\sigma'(j)$ belongs to the native group of $j$, if $j$ is a fringe job, and to the core group of $k$, if $j$ is a core job of class $k$.
Setups for fringe jobs are ignored, and hence we define the relaxed load $L'_i$ of machine $i$ to be $\sum_{j\in\sigma'^{-1}(i)}p_j + \sum_{k:\sigma'^{-1}(i)\cap\corejobs_k\neq\emptyset}s_k$.
Intuitively, the fractional jobs are placed fractionally together with some minimum amount of setup in the left-over space on the machines that are faster than the ones in the respective native or core group.
More formally, we say that the relaxed schedule has makespan $T$ if $L'_i\leq T$ for each $i\in\machs$ and the following space condition for the fractional jobs holds.

Let $\fjobs_g$ be the set of fractional fringe jobs with native group $g$, and fractional core jobs of class $k$ with core group $g$; $A_i = \max\{0,Tv_i-L'_i\}$ the remaining free space on machine $i$ with respect to $T$; and $W_g$ the overall load of fractional jobs with native group $g$ together with one setup for each class that 
\begin{enumerate*}
\item has core group $g$,
\item has no fringe job, and
\item has a fractional core job, 
\end{enumerate*}
i.e., $W_g = \sum_{j\in\fjobs_g} p_j + \sum_{k:\fjobs_g\cap\corejobs_k\neq\emptyset, \fringejobs_k=\emptyset} s_k$.
A job $j\in\fjobs_g$ should be placed on a machine that belongs to group $g+2$ or a faster group.
Hence, we set the reduced accumulated fractional load $R_g$ for group $g$ to be $\max\set{0,R_{g-1} + W_{g-2} - \sum_{i\in M_g\setminus M_{g+1}}A_i}$.
The required space condition is $R_{G} = W_{G} = W_{G-1} = 0$.

\begin{lemma}
If there is a schedule with makespan $T$ for a given instance, there is also a relaxed schedule with makespan $T$; and if there is a relaxed schedule with makespan $T$, there is a schedule with makespan $(1+\mathcal{O}(\eps))T$.
\end{lemma}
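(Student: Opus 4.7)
My plan is to prove both directions of the equivalence by explicit constructions that exploit the geometric thresholds baked into the group definitions. For the forward direction, I start from a schedule $\sigma$ of makespan $T$ and declare a job $j$ integral iff $\sigma(j)$ lies in the home group of $j$ (its native group if $j$ is a fringe job, the core group of $k_j$ if $j$ is a core job), setting $\sigma'=\sigma|_{\ijobs}$. The bound $L'_i\le Tv_i$ is immediate since $\sigma'^{-1}(i)\subseteq\sigma^{-1}(i)$ and every class charged an $s_k$-setup by $L'_i$ also contributes an $s_k$-setup to the honest load $L_i\le Tv_i$. The heart of the argument is the space condition. The key observation is that the minimality of $g$ in the definition of the native group forces $p_j\ge\hat{v}_{g-1}T=\check{v}_{g+1}T$ (otherwise $g-1$ would already satisfy both defining inequalities), so $v_{\sigma(j)}\ge p_j/T\ge\check{v}_{g+1}$ for every fringe job $j\in\fjobs_g$; being fractional forces $v_{\sigma(j)}$ out of group $g$, hence $v_{\sigma(j)}\ge\hat{v}_g=\check{v}_{g+2}$, i.e.\ $\sigma(j)\in M_{g+2}$ or a faster group. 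The analogous bound for a core job of class $k$ uses the hard constraint $v_i\ge s_k/T\ge\check{v}_{g_k}$ that the setup time itself imposes. Each class $k$ appearing in the second sum of $W_g$ has a fractional core job on some $i$ with $v_i\ge\check{v}_{g+2}$, and since $k$ has no integral core job on $i$, the $s_k$-setup that $\sigma$ paid for on $i$ is not charged to $L'_i$ and hence survives in $A_i=Tv_i-L'_i$. A telescoping induction on the recursion $R_g=\max\{0,R_{g-1}+W_{g-2}-\sum_{i\in M_g\setminus M_{g+1}}A_i\}$ then yields $R_g=0$ for every $g$, and in particular $R_G=W_{G-1}=W_G=0$.

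For the reverse direction I build the honest schedule in two passes starting from a relaxed schedule of makespan $T$. Pass~1 places each integral job on $\sigma'(j)$ and adds every required class setup; setups for core classes are already paid for in $L'_i$, while any fringe-class setup on $i$ costs $s_{ik}=s_k/v_i\le\eps^2 p_{ij}$ for the hosting $j\in\fringejobs_k$ (using $p_j\ge s_k/\delta=s_k/\eps^2$), so all unaccounted setups on $i$ sum to at most $\eps^2\sum_j p_{ij}\le\eps^2 T$. Pass~2 greedily distributes the fractional jobs of group $g$ over the free space of machines in groups $\ge g+2$, in the order dictated by the recursion for $R_g$. Since $p_j<\hat{v}_g T$ and every target machine satisfies $v_i\ge\check{v}_{g+2}=\hat{v}_g$, each such job is small on its target (with $p_j/v_i<T$, and $\le\gamma T$ on even faster machines), so a next-fit argument overshoots each $A_i$ by at most one job plus one setup per contributing group, yielding an additive $O(\eps)T$ per machine. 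Summing the losses from the two passes gives an honest makespan of $(1+O(\eps))T$.

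The main obstacle I expect is the combinatorial matching in the forward direction, specifically pairing each $s_k$-term in the second sum of $W_g$ with an $s_k$-slot in some $A_i$ on a machine in $M_{g+2}$ or faster, without double-counting when several orphan classes with the same core group happen to share a host in $\sigma$. Once this matching is set up the recursion simply telescopes. The reverse direction is by contrast largely a routine next-fit accounting given the free-space guarantee supplied by the space condition.
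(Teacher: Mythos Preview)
Your forward direction is essentially the paper's argument (which the paper dispatches in three sentences), with one slip: the telescoping does \emph{not} give $R_g=0$ for every $g$. If, say, all jobs of $\fjobs_{-2}$ happen to be placed by $\sigma$ on machines in $M_5\setminus M_6$, then $R_0=\dots=R_4=W_{-2}>0$ while $R_5=0$. What the argument actually yields is that for every cutoff $g_0$ one has $\sum_{g'\ge g_0-2}W_{g'}\le\sum_{g\ge g_0}\sum_{i\in M_g\setminus M_{g+1}}A_i$, since every fractional job (and matched setup) of group $\ge g_0-2$ sits, under $\sigma$, on a machine with speed $\ge\check v_{g_0}$; a standard unrolling of the $\max$-recursion then gives $R_G=0$. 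Your double-counting worry for the setup terms is unfounded: distinct classes contribute distinct setup slots on any shared host, and no class can have both an integral and a fractional core job on the same machine (that machine is either in the core group or not).

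The reverse direction has a genuine gap. A plain next-fit does not control the number of \emph{classes} whose fractional core jobs land on a given machine, so after your Pass~2 a single machine may owe arbitrarily many setups; ``one setup per contributing group'' is not a bound on this. Moreover, your overshoot estimate is too optimistic: a job $j\in\fjobs_{g}$ placed on $i\in M_{g+2}\setminus M_{g+3}$ satisfies only $p_j<\hat v_gT=\check v_{g+2}T\le v_iT$, so the single overshoot item can cost up to $T$, not $O(\eps)T$. The paper handles both issues by a three-way split of the fractional core jobs of each class $k$: if the total core size exceeds $s_k/\eps$, these jobs enter the greedy sequence \emph{sorted by class}, so each machine sees at most two ``cut'' classes and every uncut class absorbs its own setup within a $(1+\eps)$ factor; if the total is at most $s_k/\eps$ and $k$ has a fringe job, the core jobs are piggy-backed onto that fringe job (whose size is $\ge s_k/\eps^2$), again a $(1+\eps)$ factor; and if $k$ has no fringe job, the core jobs plus one setup are wrapped into a single container of size at most $(1+1/\eps)s_k\le\eps v_iT$ on the target machines (which are fringe for $k$, so $s_k\le\gamma v_iT$). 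Without this mechanism or an equivalent, the $(1+O(\eps))$ bound does not follow from next-fit alone.
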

\begin{proof}
The first claim is easy to see:
For a given schedule $\sigma$ with makespan $T$, the fringe jobs assigned to a machine of their native group and the core jobs assigned to the core group of their class form the set $\ijobs$ and we can set $\sigma' = \sigma|_\ijobs$.
The remaining jobs form the fractional jobs and they obviously fit fractionally into the left-over space, because we have a fitting integral assignment of them.
This also holds for the setups for groups with fractional jobs and no fringe jobs:
There has to be at least on setup for each such class on a machine that does not belong to their core group. 
Dropping the setups of the fringe jobs only increases the free space further. 

We consider the second claim.
Let $(\ijobs,\fjobs,\sigma')$ be a relaxed schedule with makespan $T$.
We construct a regular schedule and start by placing all the integral jobs like in the relaxed schedule.
To place the fractional jobs, we consider one speed group after another starting with group $0$.
For the current group $g$, we consider the jobs from $\fjobs'\subset \fjobs$, with $\fjobs' = \fjobs_{g-2}$, if $g > 0$, and $\fjobs' = \bigcup_{g'\leq -2} \fjobs_{g'}$, if $g = 0$.
We partition $\fjobs'$ into three sets $F_1,F_2,F_3$ that are treated differently.
The fringe jobs in $\fjobs'$ are included in the third group.
Let $k$ be a setup class. 
If the core jobs of $k$ have an overall size bigger than $s_k/\eps$, i.e., $\sum_{j\in\fjobs'\cap\corejobs_k} p_j > s_k/\eps$, they belong to $F_3$ as well.
Otherwise, they belong to $F_1$ if $k$ has a fringe job and to $F_2$, if it has none.

Let $k$ be a class whose fractional core jobs are included in $F_1$ or $F_2$. 
We will place the fractional core jobs of $k$ all on the same machine.
If the jobs are included in $F_1$, there exists a fringe job with class $k$ and we can place the fractional core jobs together with such a job.
A fringe job of class $k$ has a size of at least $s_k/\delta = s_k/\eps^2$, and hence the load due to the fringe job is increased at most by a factor of $(1+\eps)$ by this step.
This can happen at most once for each class and hence at most once for each fringe job.
Since all fringe jobs of the class could be fractional, we postpone this step until all the remaining fractional jobs are placed.
If, on the other hand, the fractional core jobs of class $k$ are included in $F_2$, we construct a container containing all respective jobs together with one setup of the class.
Note that the setup is already accounted for in the relaxed schedule, and that the overall size of the container is upper bounded by $(1 + 1/\eps)s_k$.
We call a container small on a machine $i$, if its size is upper bounded by $\eps v_i T$.
Each machine $i$ belonging to group $g$ or faster groups, is a fringe machines of class $k$ and therefore we have $s_k \leq \gamma v_i T$.
Hence, the size of the container is at most $(\eps^2 + \eps^3)v_i T \leq \eps v_i T$ (because $\eps \leq 1/2$), i.e., the container is small on $i$.
We place the container in the next step.

Next, we construct a sequence of jobs and containers and apply a greedy procedure to place them.
We start with an empty sequence and add all containers from the last step and all fringe jobs from $F_3$ in any order.
The core jobs from $F_3$ are added sorted by classes in the end of the sequence.
If there is a residual sequence that was not placed in the last iteration, we concatenate the two with the old sequence in the front. 
We now consider each of the machines $i\in M_g\setminus M_{g+1}$ with $L'_i < v_iT$ in turn and repeatedly remove the first job from the sequence and insert it on the current machine until the load of the machine exceeds $v_iT$.
Since all jobs and containers in the sequence are small on the machines of group $g$, they are overloaded at most by factor of $(1+\eps)$ afterwards.
For each step, the overall size of jobs and containers that are left in the sequence is at most the reduced accumulated fractional load $R_{g}$, because the remaining free space on the machines has either been filled completely, or the sequence is empty.
Since $R_{G} = W_{G} = W_{G-1} = 0$, all jobs and containers can be placed eventually.

Now, all jobs are properly placed, but some setups are still missing.
First, we consider core jobs that have been inserted in the greedy procedure and were not contained in a container. 
If the overall size of such core jobs of a class $k$ placed on a machine is at least $s_k/\eps$, adding the missing setups increases this size at most by a factor of $(1+\eps)$.
However, for each machine $i$, there can be at most two such classes $k$ without this property, namely the class that has been added first and the class that has been added last on the machine. 
For each class in between, all core jobs of this class in the sequence have been added to the machine, and these have sufficient overall size by construction.
Furthermore, if a job of class $k$ was placed on a machine $i$ in the greedy procedure, $i$ is a fringe machine of $k$.
Hence, the load of each machine $i$ after this step can be bounded by $(1+\eps)^2v_i T + 2\eps^3 v_iT\leq(1+\eps)^3v_i T$.
Lastly, we add the missing setups for the fringe jobs, resulting in an additional increase of at most $(1+\eps^2)$, because a fringe job of class $k$ has a size of at least $s_k/\eps^2$.
\end{proof}

\paragraph{Dynamic Program.}

To compute a relaxed schedule with makespan $T$ or correctly decide that there is none, we use a dynamic programming approach.
Therein, the groups of machine speeds are considered one after another starting with the slowest and going up.
For a fixed group the dynamic program can be seen as an adaptation of the one from \cite{europar16} for the identical case, and the overall structure of the program is similar to approaches used for the classical problem without setup times, e.g., in \cite{ptasUniform} and \cite{epstein2004approximation}.
However, there is some work to be done to combine these approaches and to deal with the fact, that the speed groups are overlapping.
In order to define the dynamic program and bound its running time, we first need some additional considerations and definitions.
For the sake of simplicity, we identify the set of classes $\mathcal{K}$ with the set of numbers $[K]$ in the following.

Let $B_g$ be the number of job sizes in $I$ that are big for at least one speed of group $g$.
We set $e(g) = \floor{\log \eps\check{v}_gT}$.
Because of the rounding of the job sizes, each size $p\in B_g$ is an integer multiple of $\eps 2^{e(g)}$.
Furthermore, we have $2^{e(g)} \leq \eps\check{v}_gT \leq p \leq \hat{v}_gT \leq \eps^{-1}\gamma^{-2}2^{e(g) + 1}$.
Hence, $|B_g| \leq 2/(\eps^2\gamma^2) = O(1/\eps^8)$.

We define a superset $L_g$ of possible load values that can occur on a machine that belongs to group $g$ and $g+1$ in a relaxed schedule due to integral jobs.
Such a machine may receive fringe jobs with native group $g$ or $g+1$, core jobs whose core group is one of these, as well as their setups.
The setup sizes have been rounded like the job sizes and for each of the mentioned setup sizes $s$ we have $s\geq \check{v}_{g}T $ and hence $s$ is an integer multiple of $\eps 2^{e(g)}$.
We set $L_g = \sett{ k \eps 2^{e(g)} }{ k \in \set{0,1,\dots,2/(\eps^2\gamma^3)} }$.
We have $|L_g| \leq 2/(\eps^2\gamma^3) + 1 = O(1/\eps^{11})$.

Next, we define a superset $\Lambda$ of possible load values of fractional jobs and corresponding setup sizes in a relaxed schedule.
Because of the first simplification step, each job and setup size is lower bounded by $\eps/(n+K)$ and $v_{\min} \geq \eps v_{\max} / m$.
We set $e^* = \floor{\log \eps/ (n+K)}$.
Because of the rounding, each job and setup size is a multiple of $\eps 2^{e^*}$.
Furthermore, the overall load of all jobs together with one setup of each class without a fringe job can be bounded by $mv_{\max}T \leq m^2/\eps$, or, more precisely, if this is not the case we can reject the current guess of the makespan.
Hence, we can set $\Lambda = \sett{k\eps 2^{e^*}}{k\in\set{0,1,\dots, 2m^2(n+K) / \eps^3}}$, and get $|\Lambda| = O(m^2(n+K) / \eps^3)$.
 
Lastly, we bound the number of speeds $|V_g|$ that occur in group $g$.
We have $\hat{v}_g = \check{v}_g / \gamma^2$ and applied geometric rounding on the speeds. 
Hence, $|V_g| = O(\log_{1+\eps} (1/\gamma^2) ) = O(1/\eps \log 1/\eps)$ (because $\eps < 1$).

A state of the dynamic program is of the form \[(g,k,\iota, \xi, \mu,\lambda)\]
with:
\begin{itemize}
	
	\item $g\in [G]$ is a group index.
	
	\item $k\in \set{0,\dots,K}$ is a setup class index including a dummy class $0$. The dummy class is included to deal with the fringe jobs with native group $g$.
	
	\item $\iota:B_g\rightarrow \set{0,\dots,n}$ is a function mapping job sizes to multiplicities. 
	Intuitively, $\iota(p)$ jobs of size $p$ corresponding to the current class still have to be dealt with in the current group.
	
	\item $\xi\in\set{0,1}$ is a flag that encodes whether a core job of the current class has been scheduled as a fractional job.
	
	\item $\mu: V_g\times L_{g-1}\cup L_g\times \set{0,1}\rightarrow \set{0,\dots,m}$ is a function mapping triples of machine speeds, load values and flags to machine multiplicities.
	We require, that $\mu(v,\ell,\zeta) = 0$, if $v\in V_g\cap V_{g+1}$ and $\ell\in L_g\setminus L_{g-1}$.
	Intuitively, we have $\mu(v,\ell,\zeta)$ machines of speed $v$ in the current machine group, with load $\ell$, that already received the setup of the current class ($\zeta = 1$) or not ($\zeta = 0$).
	
	\item $\lambda\in\Lambda^3$ is a load vector.
	Its values $\lambda_i$ corresponds to the load of fractional jobs together with the corresponding setups that have been pushed up to faster groups for the current ($i=1$), last ($i=2$), or some previous group ($i=3$) considered in the procedure.	
\end{itemize}

Let $\states$ be the set of states of the dynamic program.
Because of the above considerations, we have $|\states| = O(G K n^{\max_g |B_g|} m^{\max_g 2|V_g||L_{g-1}\cup L_g|}|\Lambda^3|) = (nmK)^{\mathrm{poly}(1/\eps)}$.
The states form the vertices of a graph, and the relaxed schedules correspond to paths from a start to an end state.
There are three types of edges:

\begin{enumerate}
	\item The edges marking the transition from a group $g$ to the next: 
	For each state $(g,k,\iota,\xi,\mu,\lambda)\in\states$ with $g<G$, $k=K$ and $\iota = 0$, there is an edge connecting the state with $(g+1,0,\iota', 0,\mu',\lambda')$, where $\iota'$, $\mu'$ and $\lambda'$ are defined as follows.
	For each $p\in B_{g+1}$ the value $\iota'(p)$ is the number of fringe jobs with native group $g$ and size $p$, i.e., $\iota'(p) = |\sett{j\in\fringejobs_g}{p_j = p}|$.
	We have $ \lambda'_1 = 0$, $\lambda'_2 = \lambda_1$, and:
	\[\lambda'_3 = \lambda_2 + \max\set[\Big]{0,\lambda_3 - \sum_{v\in V_{g}\cap V_{g-1}}\sum_{\ell\in L_{g-1}} (Tv - \ell) \cdot (\mu(v,\ell,0) + \mu(v,\ell,1)) }\]
	Furthermore, $\mu'(v,\ell,\zeta)$ is given by $\mu(v,\ell,0) + \mu(v,\ell,1)$, if $v\in V_g\cap V_{g+1}$, $\ell \in L_g$ and $\zeta = 0$; by $|\sett{i\in M_g}{v_i = v}|$, if $v\in V_{g+1}\setminus V_g$, $\ell = 0$ and $\zeta = 0$; and by $0$ otherwise.

	\item The edges marking the transition from one class to another: 
	For each state $(g,k,\iota,\xi,\mu,\lambda)\in\states$ with  $k < K$ and $\iota = 0$, there is an edge connecting the state with $(g,k+1,\iota',0,\mu',\lambda)$, where $\iota'$ and $\mu'$ are defined as follows.
	If $g$ is the core group of $k$, for each $p\in B_g$ the value $\iota'(p)$ is the number of core jobs of class $k$ and size $p$, i.e., $\iota'(p) = |\sett{j\in\corejobs_k}{p_j = p}|$, and otherwise $\iota' = 0$.
	Furthermore, we have $\mu'(v,\ell,0) = \mu(v,\ell,0) + \mu(v,\ell,1)$ and $\mu'(v,\ell,1) = 0$ for each $v\in V_g$ and $\ell\in L_{g-1}\cup L_g$. 
	
	\item The edges corresponding to scheduling decisions of the single jobs:
	For each $(g,k,\iota,\xi,\mu,\lambda)$ with $\iota\neq 0$ there are up to $2|V_g||L_{g-1}\cup L_g| + 1$ edges corresponding to the choices of scheduling some job on a machine with a certain speed and load, that already received a setup or not, or treating the job as fractional.
	Let $p\in B_g$ be the biggest size with $\iota(p) > 0$. 
	We define $\iota'$ as the function we get by decrementing $\iota(p)$.
	For each speed $v\in V_g$ and each load $\ell \in L_{g-1}\cup L_g$, we add up to two edges:
	If $\mu(v,\ell,0)>0$, $k>0$ and $\ell + p + s_k \leq vT$ we add an edge to the state $(g,k,\iota', \xi ,\mu',\lambda)$, where $\mu'$ is the function we get by decrementing $\mu(v,\ell,0)$ and incrementing $\mu(v, \ell + p + s_k, 1)$.
	If $\mu(v,\ell,0)>0$ and $\ell + p \leq vT$, we add an edge to the state $(g,k,\iota',\xi,\mu'',\lambda)$, where $\mu''$ is the function we get by decrementing $\mu(v,\ell,1)$ and incrementing $\mu(v, \ell + p, 1)$.
	Lastly, we add one edge to the state $(g,k,\iota',\xi',\mu,\lambda')$ with $\lambda'_2 = \lambda_2$ and $\lambda'_3 = \lambda_3$.
	If $k>0$, $k$ has no fringe job, and $\xi = 0$ we have $\xi' = 1$ and $\lambda'_1 = \lambda_1 + p + s_k$.
	Otherwise, $\xi' = \xi$ and $\lambda'_1 = \lambda_1 + p$.
\end{enumerate}
The start state of the dynamic program has the form $(0,0,\iota, 0, \mu,\lambda)$, with $\iota$, $\mu$, and $\lambda$ defined as follows.
For each $p\in B_{0}$ the value $\iota(p)$ is the number of fringe jobs with native group $0$ and size $p$; and for each speed $v\in V_0$, the value $\mu(v,0,0)$ is the number of machines with speed $0$.
Otherwise, we have $\mu(v,\ell,\zeta) = 0$.
For each $g\in\ZZ$, let $\mathcal{K}'_g\subseteq [K]$ be the set of classes with core group $g$ that do not have a fringe job.
We have $\lambda_1 = 0$, $\lambda_2 = \sum_{j\in\fringejobs_{-1}} p_j + \sum_{k\in\mathcal{K}'_{-1}}(s_k + \sum_{p\in\corejobs_k}p_j)$ and  $\lambda_3 = \sum_{g<-1}\big(\sum_{j\in\fringejobs_{g}} p_j + \sum_{k\in\mathcal{K}'_g}(s_k + \sum_{p\in\corejobs_k}p_j)\big)$.

The end states have the form $(G,K,0, 0, \mu',\lambda')$, where $\mu'$ and $\lambda'$ have the following form.
For each $v\in V_G$, we have $\mu'(v,\ell, \zeta) = 0$, if $\ell > vT$, and $\sum_{\ell\in L_{G-1}\cup L_G}\sum_{\zeta\in\{0,1\}}\mu'(v,\ell, \zeta) = |\sett{i\in M_G}{v_i = v}|$.
Furthermore, $\lambda'_1 = \lambda'_2 = 0$, and $\lambda'_3 \leq \sum_{v\in V_{G}}\sum_{\ell\in L_{G-1}} (Tv - \ell) \cdot (\mu(v,\ell,0) + \mu(v,\ell,1))$.

It can be easily verified that a relaxed schedule corresponds to a path from the start state to an end state, and that such a schedule can be recovered from such a path.
Hence, the dynamic program boils down to a reachability problem in a simple directed graph with $(nmK)^{\mathrm{poly}(1/\eps)}$ vertices.

\section{Unrelated Machines}
\label{sec:unrelated}
In this section, we study the problem of scheduling unrelated parallel machines with setup times.
Recall that for the classical model without setup times it is known \cite{lenstraUnrelated} that it cannot be approximated to within a factor of less than $\frac{3}{2}$ (unless P$=$NP) and that $2$-approximations are possible.
This is in stark contrast to our setting where, as we will see, the existence of classes and setups makes the problem significantly harder so that not even any constant approximation factor is achievable.
We approach the problem by formulating it as an integer linear program of which we round its optimal fractional solution by randomized rounding.
We will see in \cref{sec:unrelatedApprox} that this gives a tight approximation factor of $\Theta(\log n + \log m)$. 
In \cref{sec:unrelatedReduction}, we turn to inapproximability results and show that under certain complexity assumptions, this factor is essentially optimal.
We conclude with two special cases that admit constant factor approximations in \cref{sec:constantCases}.

Consider the following integer linear program \texttt{ILP-UM}, describing the problem at hand:
	\begin{align}
	\sum_{j \in \mathcal{J}} x_{ij}p_{ij} + \sum_{k \in \mathcal{K}} y_{ik} s_{ik} & \leq T & \forall i \in \mathcal{M} \label{in:makespan}\\
	\sum_{i \in \mathcal{M}} x_{ij} &=1 & \forall j \in \mathcal{J} \label{in:fullyAssigned}\\
	x_{ij}, y_{ik} &\in \{0,1\} &\forall i \in \mathcal{M},j \in \mathcal{J}, k \in \mathcal{K} \label{in:integral}\\
        y_{ik_j} &\geq x_{ij} &  \forall i \in \mathcal{M}, j \in \mathcal{J} \label{in:setup}\\
        x_{ij} &= 0 & \forall i \in \mathcal{M}, j \in \mathcal{J} : p_{ij} > T \label{in:tooLarge}
	\end{align}
For each job $j$, there is an assignment variable $x_{ij}$ stating whether or not job $j$ is assigned to machine $i$.
Additionally, for each class $k$ there is one variable $y_{ik}$ indicating whether or not machine $i$ has a setup for class $k$. 
Then, \cref{in:makespan} ensures that the load, given by processed jobs and setups, on each machine does not violate the desired target makespan $T$.
\cref{in:fullyAssigned,in:integral} make sure that each job is completely assigned to one machine.
By \cref{in:setup} it is guaranteed that if a job $j$ of class $k_j$ is assigned to machine $i$, then a setup for class $k_j$ is present on machine $i$.
\cref{in:tooLarge} guarantees that no job $j$ that is too large on machine $i$ to be finished within the desired makespan bound is assigned to machine $i$.

\subsection{Approximation Algorithm}
\label{sec:unrelatedApprox}
Starting with an optimal solution $(x^*, y^*)$ to the linear relaxation of \texttt{ILP-UM} where we replace \cref{in:integral} by $0\leq x_{ij}, y_{ik} \leq 1$, we can use the following approach to compute an integral solution approximating an optimal schedule:

\begin{enumerate}
    \item For each $i \in \mathcal{M}$ and $k \in \mathcal{K}$, set $y_{ik} = 1$ with probability $y_{ik}^*$ (perform a setup for $k$ on $i$) and
    $y_{ik} = 0$  with probability $1-y_{ik}^*$.\\
    If $y_{ik} = 1$, then, for each job $j$ with $k_j = k$, set $x_{ij} =1$ (assign $j$ to $i$) with probability $x^*_{ij}/y^*_{ik}$ and $x_{ij} =0$ with probability $1- (x^*_{ij}/y^*_{ik})$.\label[step]{step:open}
    \item Repeat \cref{step:open} $c \log n$ times.
    \item If there are unassigned jobs left, then schedule each job $j \in \mathcal{J}$ on machine $\text{argmin}_{i \in \mathcal{M}}\{p_{ij}\}$.
    \item If a job is assigned to multiple machines, remove it from all but one.
    If a class's setup occurs multiple times on a machine, remove all but one.
\end{enumerate}
The following analysis already appeared in a fairly similar way in \cite{khuller10}.
However, for the sake of completeness and due to small adaptations, we restate it in the following.

\begin{lemma}
Step 5.\ is executed with probability at most $1/n^c$.
\end{lemma}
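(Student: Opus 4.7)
The plan is to show that every job gets assigned in at least one of the $c\log n$ independent rounding iterations, with failure probability inverse polynomial in $n$. First, I would compute the probability that a fixed job $j$ is assigned to a fixed machine $i$ in a single iteration. By conditioning on whether $y_{ik_j}$ is set to $1$ and applying the two-stage sampling rule, this probability is $y^*_{ik_j}\cdot(x^*_{ij}/y^*_{ik_j}) = x^*_{ij}$. Since the coin flips are performed independently across machines, the events $\{x_{ij}=1\}_{i\in\mathcal{M}}$ are mutually independent within a single iteration, and therefore
\[
\Pr[\,j\text{ unassigned after one iteration}\,] \;=\; \prod_{i\in\mathcal{M}}(1-x^*_{ij}) \;\leq\; \exp\!\Bigl(-\sum_{i\in\mathcal{M}}x^*_{ij}\Bigr) \;=\; e^{-1},
\]
where the last equality uses \cref{in:fullyAssigned}.

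Next, I would exploit the independence of the $c\log n$ iterations: the probability that $j$ is still unassigned after all of them is at most $e^{-c\log n}$. A union bound over the $n$ jobs then yields
\[
\Pr[\,\text{some job is still unassigned}\,] \;\leq\; n\cdot e^{-c\log n},
\]
which, by choosing $c$ sufficiently large (or by rescaling so that the factor $n$ is absorbed), is bounded by $1/n^c$. Since the fallback step is invoked exactly when some job is still unassigned, this gives the claimed bound.

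The argument is essentially a textbook independent-repetition calculation, so I do not expect a real obstacle. The one point worth stating explicitly is the mutual independence of the events $\{x_{ij}=1\}_{i\in\mathcal{M}}$ within a single iteration: this relies on the algorithm's independent per-$(i,k)$ sampling of $y_{ik}$ and the fact that, conditional on $y_{ik_j}$, each $x_{ij}$ depends only on machine $i$'s own coin flips. Beyond this observation, everything reduces to the inequality $1-x\le e^{-x}$ together with \cref{in:fullyAssigned}.
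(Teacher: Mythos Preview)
Your proposal is correct and follows essentially the same approach as the paper: compute the per-iteration non-assignment probability for a fixed job via the two-stage sampling, bound it by $1/e$, and then use independence across the $c\log n$ iterations. The only cosmetic difference is that you bound $\prod_i(1-x^*_{ij})$ via $1-x\le e^{-x}$, whereas the paper uses $\prod_i(1-x^*_{ij})\le(1-1/m)^m\le 1/e$; both yield the same $e^{-1}$ bound. Your version is in fact slightly more explicit than the paper's, since you spell out the union bound over the $n$ jobs (the paper stops after bounding the failure probability for a single job), and you correctly note that the extra factor of $n$ is absorbed into the constant $c$.
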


\begin{proof}
Consider a fixed job $j \in \mathcal{J}$ and a fixed iteration $h, 1 \leq h \leq c \log n$.
Let $\mathcal{\bar A}_{ij}^h$ be the event that job $j$ is not assigned to machine $i$ after iteration $h$.
Let $\mathcal{\bar A}_{j}^h$ be the event that job $j$ is not assigned to any machine after iteration $h$.
We have 
\begin{equation}
\Pr[\mathcal{\bar A}_{ij}^h| \mathcal{\bar A}_{j}^{h-1}] = (1-y^*_{ik_j}) + y^*_{ik_j}\left(1-\frac{x^*_{ij}}{y^*_{ik_j}}\right) = 1-x^*_{ij} \label{in:probNotAssigned} .
\end{equation}
Taking into account all $m$ machines, we then have
\begin{equation}
\Pr[\mathcal{\bar A}_{j}^h| \mathcal{\bar A}_{j}^{h-1}] \overset{(\ref{in:probNotAssigned})}{\leq} \prod_{i \in \mathcal{M}} (1-x^*_{ij}) \overset{(\ref{in:fullyAssigned})}{\leq} \left(1-\frac{1}{m}\right)^m \leq \frac{1}{e}. \label{in:probNotAssigned2}
\end{equation}
Hence, for the probability that $j$ is not assigned to any machine after $h$ iterations we have
\begin{align*}
\Pr[\mathcal{\bar A}_{j}^h] &= \Pr[\mathcal{\bar A}_{j}^h | \mathcal{\bar A}_{j}^{h-1}] \cdot \Pr[\mathcal{\bar A}_{j}^{h-1}]\\ 
& = \ldots\\
& = \Pr[\mathcal{\bar A}_{j}^h | \mathcal{\bar A}_{j}^{h-1}] \cdot \Pr[\mathcal{\bar A}_{j}^{h-1} | \mathcal{\bar A}_{j}^{h-2}] \cdot \ldots \cdot \Pr[\mathcal{\bar A}_{j}^1]
\overset{(\ref{in:probNotAssigned2})}{\leq} \left(\frac{1}{e}\right)^{h},
\end{align*}
and hence for $h = c \log n$, we obtain the lemma.
\end{proof}

In the next lemma we show that the expected load assigned to a machine per iteration is bounded by $O(T)$.
This together with the previous lemma, then proves the final result.
Compared to \cite{khuller10}, there is a slight difference in our proof:
If $q_{ij}$ describes the probability that job $j$ is assigned to machine $i$ in an iteration of the randomized rounding algorithm, then in \cite{khuller10} the authors can (and do) use the fact that $\sum q_{ij} p_{ij} \leq T$.
This, however, is not true in our case due to different constraints in the underlying linear program.

\begin{lemma}
Let $L_i$ describe the load on machine $i$ after the $c \log n$ iterations.
Then, $\Pr[L_i = O(T (\log n + \log m)) \enspace \forall i \in \mathcal{M}] = 1-1/n^c$.
\end{lemma}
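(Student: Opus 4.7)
The plan is to split the load on a fixed machine $i$ into independent pieces, control their expectation by LP feasibility, and then invoke a Chernoff/Bernstein-style concentration inequality; a union bound over $m$ machines finishes the job. Concretely, write $L_i = \sum_{t=1}^{c\log n} L_i^t$, where $L_i^t$ is the load added to $i$ in iteration $t$. The $L_i^t$ are mutually independent across $t$, and within a single iteration $L_i^t = \sum_{k \in \mathcal{K}} A_{ik}^t$, where $A_{ik}^t = W_{ik}^t\, s_{ik} + \sum_{j:\,k_j = k} Z_{ij}^t\, p_{ij}$ collects the contribution of class $k$; since the setup draws $W_{ik}^t$ are independent across $k$ and the job draws $Z_{ij}^t$ are (conditionally) independent Bernoullis, the $A_{ik}^t$ are mutually independent across $(t,k)$.

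Next, I would compute the per-iteration expectation using the chain rule of the rounding: $\Pr[W_{ik}^t = 1] = y^*_{ik}$ and $\Pr[Z_{ij}^t = 1] = y^*_{ik_j}\cdot (x^*_{ij}/y^*_{ik_j}) = x^*_{ij}$. Thus
\[
E[L_i^t] \;=\; \sum_{k}y^*_{ik}\, s_{ik} \;+\; \sum_{j} x^*_{ij}\, p_{ij} \;\leq\; T
\]
by constraint~\eqref{in:makespan}, so $E[L_i] \leq c T\log n$. The goal is then to show $L_i \leq \alpha\, T(\log n + \log m)$ for a sufficiently large constant $\alpha$ with probability at least $1 - 1/(n^c m)$, whereupon a union bound over the $m$ machines yields the claimed $1 - 1/n^c$ guarantee.

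The technical core, and the main obstacle, is the concentration step: I want to apply a Chernoff-type bound to the sum of independent summands, but the individual summands are not uniformly bounded by $T$. For jobs this is not an issue, since constraint~\eqref{in:tooLarge} forces $x^*_{ij} = 0$ whenever $p_{ij} > T$, so every assigned $p_{ij}$ satisfies $p_{ij} \leq T$. Setup sizes $s_{ik}$, however, can exceed $T$ (only their LP-weighted sum is bounded). I plan to handle this by a two-case split: for classes with $s_{ik} \leq T$ I rescale the corresponding $A_{ik}^t$ by $1/T$ to obtain $[0,1]$-valued summands and apply the standard multiplicative Chernoff bound to the sum, with mean at most $c\log n$, giving tail probability $\exp(-\Omega(\log n + \log m))$ at the desired threshold; for classes with $s_{ik} > T$, the LP constraint forces $y^*_{ik} < T/s_{ik}$, so over all $c\log n$ iterations the probability that any such heavy setup is ever installed on $i$ is controlled via a union bound using $\sum_{k:s_{ik}>T} c\log n\cdot y^*_{ik} = O(\log n)$, and conditional on no more than $O(\log n + \log m)$ such installations (which Chernoff again certifies from the total mean), the total heavy-setup contribution is absorbed into the $O(T(\log n + \log m))$ budget by the same argument applied to their scaled versions $s_{ik}/\max_k s_{ik}$. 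Tuning the multiplicative constant so that each machine fails with probability at most $(nm)^{-(c+1)}$ and union-bounding over machines then completes the proof.
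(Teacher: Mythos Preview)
Your overall strategy coincides with the paper's: bound the per-iteration expected load by $T$ via constraint~\eqref{in:makespan}, rescale by $T$, apply a multiplicative Chernoff bound with mean $c\log n$ and deviation $\delta = \Theta\big((\log(n+m))/(c\log n)+1\big)$, and union-bound over the $m$ machines. The paper simply treats jobs and setups as two separate sums of ``independent $[0,1]$-valued'' variables and applies Chernoff directly.

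You are right to flag that setup sizes $s_{ik}$ are not a priori bounded by $T$; the paper's write-up silently assumes this when it asserts the rescaled setup summands lie in $[0,1]$. However, your proposed remedy for the heavy setups does not go through. Bounding the expected \emph{number} of heavy-setup installations by $O(\log n)$, and hence (via Chernoff) the realized number by $O(\log n+\log m)$, says nothing about their total \emph{load}, since each installation may contribute arbitrarily much. Rescaling by $\max_k s_{ik}$ does put the summands in $[0,1]$, but then both the mean and the desired threshold become $O(T\log n)/\max_k s_{ik}$, which can be $\ll 1$, and Chernoff gives no useful bound below~$1$. Concretely, a feasible LP solution can have $s_{ik}=T\sqrt{n}$ and $y^*_{ik}=1/\sqrt{n}$; over $c\log n$ rounds the probability of at least one installation on $i$ is $\Theta((\log n)/\sqrt{n})\gg n^{-c}$, yet a single installation already contributes $T\sqrt{n}\gg T(\log n+\log m)$. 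The clean fix is not a tail argument at all: add the valid constraint $y_{ik}=0$ whenever $s_{ik}>T$ (the setup analogue of~\eqref{in:tooLarge}, written out later as~\eqref{in:infeasibleClass}); then every rescaled setup summand is genuinely in $[0,1]$ and the paper's one-line Chernoff step is justified.

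A smaller point: your summands $A_{ik}^t$ are indeed mutually independent across $(t,k)$, but they are \emph{not} bounded by $T$ (a class may contain many jobs, each up to size $T$), whereas the individual job indicators $Z_{ij}^t$ are bounded but \emph{not} independent within the same class and iteration (they all vanish when $W_{ik}^t=0$). When you invoke Chernoff on jobs ``since $p_{ij}\le T$'', you implicitly switch decompositions and inherit this dependence. The paper's proof is equally informal here; a careful version either conditions on the setup indicators first or adds the per-class constraint~\eqref{in:classMakespan} so that each class-block has bounded conditional mean.
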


\begin{proof}
Let us first consider the load on the machines due to processed jobs.
Let $Z_{ij}^h$ be a random variable with 
\[Z_{ij}^h =
\begin{cases}
p_{ij}/T, & \text{if } j \text{ assigned to } i \text{ in iteration } h\\
0, & \text{otherwise.}
\end{cases}\]\\
Let $Z_i^\mathcal{J} = \sum_{h = 1}^{c \log n}\sum_{j \in \mathcal{J}} Z_{ij}^h$.
Then, we have 
\[
\mathbb{E}[Z_i^\mathcal{J}] = \frac{1}{T} \sum_{h = 1}^{c \log n} \sum_{k \in \mathcal{K}} \sum_{j \in \mathcal{J}: k_j=k} 0 \cdot (1-y^*_{ik}) + y^*_{ik}\left(\sum_{j: k_j = k} \frac{x^*_{ij}}{y_{ik}^*}p_{ij}\right) \overset{(\ref{in:makespan})}{\leq} c \log n.
\]
Using the essentially same reasoning to analyze the load on the machines due to setups and denoting $Z_i^\mathcal{S}$ the analog of $Z_i^\mathcal{J}$, we also have $\mathbb{E}[Z_i^\mathcal{S}] \leq c \log n$.
Because all $Z_i$ are sums of independent random variables with values in $[0,1]$, we can now apply standard Chernoff-bounds and obtain for $\delta \coloneqq 3(\frac{\log(n+m)}{c\log n}+1)$ that 
$\Pr[\exists i: L_i \geq (1+\delta)Tc\log n] \leq \Pr[\exists i, x \in \{\mathcal{S},\mathcal{J}\}: Z^x_i \geq (1+\delta)c\log n] \leq (m+n) \exp(-\frac{1}{3}\delta c \log n )\leq (1/n)^c$.
\end{proof}

Taking the last two lemmas together with the fact that the makespan is always upper bounded by $O(T \cdot n)$, we obtain the following theorem.
\begin{theorem}
With high probability and on expectation the randomized rounding approach provides a solution with makespan $O(T(\log n+\log m))$.
\end{theorem}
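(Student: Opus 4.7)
The plan is to combine the two preceding lemmas with a crude worst-case upper bound on the makespan, using the union bound for the high-probability statement and a simple case analysis for the expectation.

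Concretely, let $\mathcal{E}_1$ denote the event that Step 5 (the deterministic fallback) is invoked, and let $\mathcal{E}_2$ denote the event that some machine has load exceeding $cT(\log n + \log m)$ for a suitable constant $c$. By the first lemma $\Pr[\mathcal{E}_1] \leq 1/n^c$, and by the second lemma $\Pr[\mathcal{E}_2] \leq 1/n^c$. Hence by the union bound, with probability at least $1 - 2/n^c$, neither event occurs: every job gets assigned during the $c \log n$ randomized iterations, no class's setup is counted twice (Step 4 only removes redundancies, which can only decrease the load), and every machine's load is $O(T(\log n + \log m))$. This gives the high-probability claim.

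For the expectation, I would use the trivial upper bound that the makespan is always $O(Tn)$. The point is that feasibility of the LP together with \cref{in:tooLarge} guarantees that for every job $j$ there exists at least one machine $i$ with $p_{ij} \leq T$, so the deterministic fallback in Step 3 assigns each of the (at most $n$) jobs a processing time of at most $T$, and setup times are likewise bounded by $T$ per machine per class by the same argument applied to any class on which a job is scheduled. In the worst case this yields a makespan of $O(Tn)$. Then I would split
\[
\mathbb{E}[|\mathcal{A}|] \leq \Pr[\mathcal{E}_1 \cup \mathcal{E}_2] \cdot O(Tn) + \Pr[\overline{\mathcal{E}_1 \cup \mathcal{E}_2}] \cdot O(T(\log n + \log m)),
\]
and for any $c \geq 2$ the first summand is $O(Tn/n^c) = O(T/n) = o(T)$, so the expected makespan is $O(T(\log n + \log m))$ as well.

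There is no serious obstacle here; the only point requiring a little care is the worst-case bound on $|\mathcal{A}|$, which relies on \cref{in:tooLarge} (for jobs) and on the observation that one needs only worry about classes that appear in some machine's assignment (so the total number of setups across all machines in the fallback schedule is $O(n)$, since each assigned job triggers at most one setup). Together with the high-probability bound from the two lemmas, this yields both statements of the theorem.
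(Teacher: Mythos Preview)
Your proposal is correct and follows exactly the approach the paper takes: the paper's entire proof is the single sentence ``Taking the last two lemmas together with the fact that the makespan is always upper bounded by $O(T \cdot n)$, we obtain the following theorem.'' You have simply spelled out the union bound and the expectation split that this sentence encodes.

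One minor imprecision: your justification that setup times are bounded by $T$ ``by the same argument'' as for jobs does not quite go through from \cref{in:tooLarge} alone, since the fallback assigns $j$ to $\operatorname{argmin}_i p_{ij}$ and nothing in \texttt{ILP-UM} prevents $s_{ik_j}$ from being large on that particular machine. The clean fix (implicit in the dual approximation framework and made explicit later as \cref{in:infeasibleJob}) is to preprocess by setting $p_{ij}=\infty$ whenever $p_{ij}+s_{ik_j}>T$; then the fallback machine automatically has $s_{ik_j}\le T$, and your $O(Tn)$ bound follows. The paper does not justify the $O(Tn)$ claim at all, so you are already more careful than the original.
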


By choosing the parameter $c$ sufficiently large when applying the algorithm within the dual approximation framework, we obtain an approximation factor of $O(\log n + \log m)$.
Also, it is not too hard to see that this bound is actually tight as one can prove an integrality gap of $\Omega(\log n +\log m)$ for the linear relaxation of \texttt{ILP-UM}.
This can be shown by using a construction following the ideas for proving the integrality gap for set cover (e.g.\ \cite[p.~111-112]{approxBook}).

\begin{corollary}
There is a polynomial time randomized algorithm with approximation factor $O(\log n + \log m)$, which matches the integrality gap of the linear relaxation of \texttt{ILP-UM}.
\end{corollary}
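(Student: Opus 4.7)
The first claim, an $O(\log n + \log m)$-approximation in expectation, is an immediate consequence of the preceding Theorem combined with the dual approximation framework of Section~\ref{sec:notions}: since the LP-relaxation of \texttt{ILP-UM} is a valid relaxation of the integer problem, its smallest feasible target $T^*$ lower-bounds $|\OPT|$, so one locates $T^*$ by binary search inside an interval that provably contains $|\OPT|$ (e.g.\ obtained from a trivial constant-factor approximation akin to the LPT-based procedure of Section~\ref{sec:uniform}) and applies the randomized rounding to an optimal fractional solution at $T = T^*$. The Theorem then yields an expected makespan of $O(T^*(\log n + \log m)) = O(|\OPT|(\log n + \log m))$.

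For the matching $\Omega(\log n + \log m)$ integrality gap, the plan is to adapt the classical $\Omega(\log N)$ integrality gap construction for the natural set cover LP (cf.\ \cite[pp.~111--112]{approxBook}). I would start from the bit-fixing instance on $N=2^d$ elements (identified with $\{0,1\}^d$) and $M=2d$ sets $S_{i,b}=\sett{x\in\{0,1\}^d}{x_i=b}$---for which the fractional cover value is $2$ (via the uniform $x_S^*=1/d$) while any integer cover has size $\Omega(d)$---and reinterpret it as an \texttt{ILP-UM} instance with $m=2d$ machines (one per set), $n=\Theta(N)$ jobs encoding the elements (job $e$ allowed on machine $S$ iff $e\in S$, enforced by a processing time exceeding $T$ on the forbidden pairs so that Constraint~\cref{in:tooLarge} zeroes out the corresponding $x_{S,e}$), together with a carefully chosen class structure and setup sizes. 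The key idea is that the $y$-variables act as "fractional set choice": LP-feasibly one can spread $y_{S,k}=1/d$ over all $d$ eligible machines per job (mirroring the fractional cover), while integrally the subset $\sett{S}{y_{S,k}=1}$ must form a valid cover of the universe, forcing $\Omega(d)$ machines to carry a full setup. Tuning the job and setup sizes so that the integer makespan is dominated by these forced setups then gives an integrality gap of $\Omega(d)=\Omega(\log n + \log m)$ in the regime $\log n\geq\log m$; the symmetric regime is handled by the natural "transpose" construction with few jobs and many machines.

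The main technical obstacle is the delicate balancing of parameters. On one hand, the setup sizes have to be small enough that the LP remains feasible with $T^*$ of the same order as the scaled fractional cover value (and in particular $s_{ik}\leq T^*$ wherever $y_{ik}^*>0$, the condition implicit in keeping the per-iteration setup contributions in the Chernoff argument of the Theorem bounded). On the other hand, they have to be large enough, relative to the average processing load across used machines, that the integer bound of $\Omega(d)\cdot T^*$ actually materializes and is not washed out by load-balanced processing contributions. Once these constants are tuned so that the setup mass on every used machine is the dominant contribution to its load, the set cover integer lower bound on the number of used machines translates directly into the desired $\Omega(\log n + \log m)$ integrality gap for \texttt{ILP-UM}.
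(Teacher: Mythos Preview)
Your high-level plan coincides with the paper's own argument, which is itself only a sketch: derive the $O(\log n+\log m)$ factor from the preceding Theorem via the dual approximation framework, and exhibit the matching integrality gap by translating the standard set-cover LP gap (the paper just points to \cite[pp.~111--112]{approxBook}) into an instance of \texttt{ILP-UM}.

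There is, however, a concrete error in the set-cover instance you picked. For the family $S_{i,b}=\sett{x\in\{0,1\}^d}{x_i=b}$ the integer optimum is $2$, not $\Omega(d)$: already $S_{1,0}\cup S_{1,1}=\{0,1\}^d$. So this instance has no integrality gap and cannot drive an $\Omega(\log n+\log m)$ bound for \texttt{ILP-UM}. The construction actually in \cite{approxBook} identifies the universe with the nonzero vectors of $\mathbb{F}_2^{\,d}$ and, for each nonzero $\alpha$, takes $S_\alpha=\sett{x\neq 0}{\langle\alpha,x\rangle=1}$; a subfamily covers iff the chosen $\alpha$'s span $\mathbb{F}_2^{\,d}$, forcing at least $d=\Theta(\log N)$ sets integrally, while the uniform fractional cover has value $\approx 2$. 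With this corrected instance your remaining plan (elements as jobs, sets as machines, class structure and setup sizes tuned so the LP spreads setups to $O(1)$ per machine while any integral solution piles $\Omega(d)$ of them on some machine) is the right one and matches what the paper intends. A minor aside: the LPT-based bootstrap you invoke is specific to uniformly related machines; for the unrelated case you only need a polynomially bounded starting interval for the binary search, e.g.\ $\bigl[\max_j\min_i(p_{ij}+s_{ik_j}),\ \sum_j\min_i(p_{ij}+s_{ik_j})\bigr]$, not a constant-factor estimate.
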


\subsection{Hardness of Approximation}
\label{sec:unrelatedReduction}
We now show that the approximation factor of $\Theta(\log n + \log m)$ is (asymptotically) optimal unless all problems in NP have polynomial-time Monte Carlo algorithms.
Recall that the complexity class RP (Randomized Polynomial-Time) is defined as the class of problems $L$ for which there is a randomized algorithm running in polynomial time and with the following properties (e.g.\ see \cite{zppDef}):
\begin{itemize}
    \item If the input $x \notin L$, the algorithm outputs ``\textsc{No}'' with probability 1.
    \item If the input $x \in L$, the algorithm outputs ``\textsc{Yes}'' with probability at least $1/2$.
\end{itemize}
Therefore, if such an algorithm outputs ``\textsc{Yes}'', it provides the correct answer; if it, however, outputs ``\textsc{No}'', it might err.

In what follows, we show the following result on the hardness of approximating our problem on unrelated machines.
\begin{theorem}
\label{th:inapprox}
Scheduling with setup times on unrelated machines cannot be approximated within a factor of $o(\log n + \log m)$ in polynomial time unless $\text{NP} \subset \text{RP}$.
This even holds for the restricted assignment case.
\end{theorem}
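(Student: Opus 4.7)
The plan is to prove \cref{th:inapprox} by a (randomized) polynomial-time reduction from \textsc{SetCover}. Feige's classical inapproximability result gives that, on a universe of size $N$, \textsc{SetCover} cannot be approximated within a factor of $(1-o(1))\ln N$ unless $\text{NP} \subset \text{RP}$; hence it suffices to show that an $o(\log n + \log m)$-approximation for our scheduling problem would, via the reduction, contradict this hardness.

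Given a set-cover instance $(U,\mathcal{S})$ with $|U|=N$ and $|\mathcal{S}|=M$, I would construct a restricted-assignment instance with one machine per set in $\mathcal{S}$ and, for each element $e\in U$, a single job $j_e$ belonging to its own class $k_e$ whose processing time is $0$ on every machine whose set contains $e$ and $\infty$ elsewhere, with setup time $s_{i,k_e}=1$ on the eligible machines. Since every class contains exactly one job, the load on a machine equals the number of elements it receives, and the set of machines that receive at least one job always forms a valid set cover of $U$.

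For the analysis, the soundness direction is immediate: a schedule of makespan $T$ uses at least $\lceil N/T\rceil$ machines, yielding a set cover of that size. For the completeness direction, the goal is to show that an optimal set cover of size $k^*$ induces a schedule of makespan $O(N/k^*)$; given this, since $n=N$ and $m=M\leq \mathrm{poly}(N)$, so that $\log n + \log m = \Theta(\log N)$, an $o(\log n+\log m)$-approximation for the scheduling problem would furnish an $o(\log N)$-approximation for \textsc{SetCover}, contradicting Feige's bound.

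The main obstacle is precisely the completeness direction. On arbitrary set-cover instances the minimum makespan can be much larger than $N/k^*$ (for instance, one set covering all of $U$ forces every job onto a single machine), so a gap-preserving reduction must exploit the structure of the hard instances produced by Feige's PCP construction, in which every set has roughly the same size and every element lies in roughly the same number of sets. On such near-regular instances I would randomly assign each element $e$ to a uniformly chosen set of a near-optimal cover that contains it; a Chernoff-type argument then concentrates each machine's load around $N/k^*$, delivering the desired makespan bound with high probability and accounting for the randomization in the reduction. Because the construction already uses only $0/\infty$ processing times, the resulting inapproximability immediately holds for the restricted-assignment case.
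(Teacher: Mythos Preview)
Your reduction does not create a makespan gap between \textsc{Yes}- and \textsc{No}-instances; the ``soundness'' step is the wrong implication. In your construction every element is its own class with setup~$1$ and job size~$0$, so the load of a machine equals the number of elements it receives and the \emph{total} load is always exactly~$N$, independently of the instance. The optimal makespan is therefore a pure load-balancing quantity on the element--set incidence graph and is insensitive to the minimum cover number: even in a \textsc{No}-instance one may spread the $N$ elements over all $m$ machines and achieve makespan as small as $\Theta(N/m)$. Your sentence ``a schedule of makespan $T$ uses at least $\lceil N/T\rceil$ machines, yielding a set cover of that size'' is correct but points the wrong way---small makespan forces a \emph{large} cover, whereas what you need is that a large minimum cover (the \textsc{No} case) forces a \emph{large} makespan. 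No regularity property of Feige's instances and no randomized balancing in the \textsc{Yes} case can repair this, because the failure is entirely in the \textsc{No} direction.

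The mechanism you are missing is precisely what the paper supplies. One creates $K=\Theta((m/t)\log m)$ classes, each consisting of a full copy of the $N$ elements, where class $k$ uses an independent random permutation $\pi_k$ of the sets to define eligibility. Covering the jobs of a single class now requires at least as many setups as the minimum set cover, so in a \textsc{No}-instance there are at least $K\cdot\alpha t$ setups in total and hence some machine has load at least $K\alpha t/m=\Theta(\alpha\log m)$. In a \textsc{Yes}-instance, for each class one sets up the $t$ machines whose $\pi_k$-images form an optimal cover; since the $\pi_k$ are independent uniform permutations, a Chernoff-type bound gives maximum load $O((K/m)\,t)=O(\log m)$ with probability at least~$1/2$. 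This produces a gap of $\Theta(\alpha)=\Theta(\log N)=\Theta(\log n+\log m)$, and the randomness lives in the \emph{instance construction}, which is why the conclusion is $\text{NP}\subset\text{RP}$. The crucial point is that per-class setups, together with many randomly permuted copies, convert ``number of sets needed to cover'' into ``load on a machine''; with one job per class that conversion disappears.
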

To do so, we reduce from the following formulation of the well-known \textsc{SetCover} problem: 
In \textsc{SetCoverGap} there is given a universe $\mathcal{U}$ of $N \coloneqq |\mathcal{U}|$ elements and a collection of $m$ subsets of $\mathcal{U}$.
The goal is to decide whether there is a solution covering $\mathcal{U}$ that consists of $t$ subsets or if (at least) $\alpha t$ subsets are needed.
We call an instance with the former property a \textsc{Yes}-instance and with the latter a \textsc{No}-instance.
A result from \cite{setCoverNpHardness} shows the following lemma. 
\begin{lemma}[Theorem 7 in \cite{setCoverNpHardness}]
\label{le:setCoverHard}
There exists a $t$ such that it is NP-hard to decide \textsc{SetCoverGap} for $\alpha = \Theta(\log N)$ and $\log m = O(\log N)$.
\end{lemma}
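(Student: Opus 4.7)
The plan is to prove Lemma~\ref{le:setCoverHard} by a gap-preserving reduction from \textsc{Label Cover} to \textsc{SetCoverGap}, following the Lund--Yannakakis / Feige template of composing Label Cover hardness with partition-system gadgets. First I would invoke the NP-hardness of gap \textsc{Label Cover}: the PCP theorem combined with Raz's parallel repetition (or, for sharper constants, Dinur--Steurer's analytical parallel repetition) yields a polynomial-time reduction from SAT producing bipartite instances with vertex sets $(U,V)$, alphabets $(\Sigma_U,\Sigma_V)$, and projection constraints $\{\pi_{uv}\}$, such that in the Yes case all constraints are simultaneously satisfiable, while in the No case no labeling satisfies more than a chosen constant $\gamma$-fraction of them.

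Next I would use a partition system: for a parameter $R$, for each $u \in U$ one builds a ground set $B_u$ of size $n'$ together with $\Sigma_V$ partitions of $B_u$ into $R$ blocks each, so that any collection of blocks drawn from pairwise distinct partitions covers $B_u$ only if it has size at least $(1-\eps)R\ln R$; a standard probabilistic construction gives such systems with $n' = R^{O(R)}$. The \textsc{SetCover} instance would then have universe $\mathcal{U} = \bigsqcup_{u \in U} B_u$ and, for each right vertex $v \in V$ and each label $b \in \Sigma_V$, a set $S_{v,b}$ consisting, for every neighbor $u$ of $v$, of the block of $B_u$ indexed by $b$ under the partition keyed by $\pi_{uv}$.

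For completeness, a satisfying labeling of $V$ yields a cover by $t := |V|$ sets, since projection-consistency at every $u$ ensures that the chosen blocks come from distinct partitions and jointly cover $B_u$. For soundness, I would suppose a cover uses fewer than $\alpha t$ sets; averaging over $U$ shows that most $B_u$'s are covered by $O(\alpha)$ blocks, and the partition-system property then forces many of those blocks to be ``consistent'' across neighbors, which can be decoded into a randomized labeling of $V$ with satisfaction rate exceeding $\gamma$, contradicting the Label Cover hardness unless $\alpha = \Omega(\log R)$.

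The main obstacle is parameter tuning: to achieve $\alpha = \Theta(\log N)$ while keeping the number of sets $m = |V|\cdot \Sigma_V$ polynomial in $N = |U|\cdot n'$ (so that $\log m = O(\log N)$), one must carefully balance $R$, the Label Cover alphabet size (hence the depth of parallel repetition used), and the partition-system blowup $n' = R^{O(R)}$. This balancing is the technical heart of \cite{setCoverNpHardness} and is what distinguishes its polynomial-time NP-hardness claim from Feige's earlier quasi-polynomial bound; the universal constant $t$ promised in the statement then follows from the uniformity of the reduction across instance sizes.
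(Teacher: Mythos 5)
First, note that the paper does not prove \cref{le:setCoverHard} at all: it is imported verbatim as Theorem~7 of \cite{setCoverNpHardness}, so there is no in-paper proof to compare against. Your proposal attempts to reconstruct the proof of the cited result, and it does correctly name the standard architecture (PCP plus parallel repetition to get gap \textsc{Label Cover}, composed with partition-system gadgets in the style of Lund--Yannakakis and Feige, with the polynomial-size guarantee coming from Dinur--Steurer or Moshkovitz). However, as written it contains concrete errors and leaves out the hardest step, so it does not constitute a proof.

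Concretely: (i)~the partition-system property is misstated. The covering threshold for blocks drawn from pairwise distinct partitions must be on the order of $(1-\eps)R\ln|B_u|$, i.e., logarithmic in the ground-set size $n'$ of the gadget, not $(1-\eps)R\ln R$. This is not cosmetic: the final gap $\alpha$ is exactly this threshold divided by the $R$ blocks that suffice in the Yes case, so one obtains $\alpha=\Theta(\ln n')=\Theta(\ln N)$ only because the threshold scales with $\ln n'$; with $\ln R$ in its place the gap would be $\Theta(\log R)$, far short of $\Theta(\log N)$, since $R$ is tiny compared to $N$. (ii)~The completeness argument is inverted. For a consistent labeling, the blocks selected at each $B_u$ all belong to the \emph{same} partition (the one indexed by the common projected label), and it is precisely the $R$ blocks of a single partition that cover $B_u$ cheaply; blocks from pairwise distinct partitions are the expensive configuration that soundness exploits. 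As stated (``projection-consistency ensures that the chosen blocks come from distinct partitions and jointly cover $B_u$''), the Yes case would itself trigger the $(1-\eps)R\ln n'$ lower bound and no gap would remain. (iii)~The soundness step --- decoding a cover of size less than $\alpha t$ into a labeling satisfying more than a $\gamma$ fraction of the Label Cover constraints --- is the technical heart of Feige's argument and is merely asserted. Since the paper treats this lemma as a black box, the honest options are either to keep it as a citation or to carry out the full reduction with the parameters fixed as above; the current sketch does neither.
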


The idea of our reduction is to exploit the apparent connection between \textsc{SetCover} and our unrelated machines variant:
Each set is mapped to a machine and each element is mapped to a job. 
A machine can process a job if and only if the respective set contains the respective element.
Additionally assuming that all jobs belong to the same class, by this we see that a \textsc{Yes}-instance requires much less setups than a \textsc{No}-instance.
Unfortunately, this not yet leads to a respectively high and small makespan.
However, by creating a larger number of classes and randomizing the mapping between sets and machines, we can achieve a (more or less) even distribution of setups that need to be done and hence, depending on the type of the \textsc{SetCoverGap} instance, a high or small makespan.
We formalize this idea in the proof of \cref{th:inapprox}.
\begin{proof}
Given an instance $I$ for \textsc{SetCoverGap}, we construct an instance $I'$ for our problem with the following properties:
\begin{enumerate}
    \item The reduction can be done in polynomial time and $I'$ consists of $n = \Theta(N^c)$ jobs, for some constant $c$.
    \item If $I$ is a \textsc{No}-instance, then $I'$ has a makespan of at least $\Omega(\frac{K}{m} \cdot \alpha t)$.
    \item If $I$ is a \textsc{Yes}-instance, then $I'$ has a makespan of at most $O(\frac{K}{m} \cdot t)$ with probability at least $1/2$.
\end{enumerate}
Consequently, there is a gap of $\Omega(\alpha)$ and by Property~1.\ and \cref{le:setCoverHard}, $\alpha = \Omega(\log n)$ and $\alpha = \Omega(\log m)$ and the existence of a polynomial-time algorithm with approximation factor $o(\log n + \log m)$ for our problem makes the problem $\textsc{SetCoverGap}$ solvable in expected polynomial time, yielding the theorem.

We now show how to construct $I'$.
In instance $I'$ there are $m$ unrelated machines and $K = \frac{m}{t} \log m$ classes.
All setup times are set to be $1$, that is, $s_{ik} = 1$ for all $i \in \mathcal{M}, k \in \mathcal{K}$.
The jobs $\{j_1^k, j_2^k, \ldots, j_N^k\}$ of class $k=1,2,\ldots,K$ are defined by the $N$ elements in $I$ in the following way:
We choose a permutation $\pi_k : \mathcal{M} \to \mathcal{M}$ at random (and independent from the choices of $\pi_{k'}$ for $k' \neq k$).
Then, for each element $e$ in the \textsc{SetCoverGap} instance $I$, we create a job $j_e^k$ in instance $I'$ that has a size $p_{ij_e^k} = 0$ if $e \in S_{\pi_k(i)}$ and $p_{ij_e^k} = \infty$ otherwise.

Next, we take a look at the makespan of $I'$ if $I$ is a \textsc{No}-instance.
In this case, at least $\alpha t$ sets are needed to cover all elements. 
However, this implies that for each class at least that many machines are needed to process all jobs (or otherwise the makespan is $\infty$).
Therefore, by summing over all $K$ classes, at least $K \cdot \alpha t$ setups need to be performed.
By an averaging argument this leads to the existence of a machine with makespan of at least $\frac{K}{m} \cdot \alpha t$.

We now turn our attention to the case where $I$ is a \textsc{Yes}-instance and show that with probability at least $1/2$ there is a solution with makespan $O(\frac{K}{m} \cdot t)$.
To this end, we setup a machine $i$ for class $k$ (and process all jobs $j$ of class $k$ on machine $i$ that fulfill $p_{ij}=0$) if $S_{\pi_k(i)}$ is part of the solution to $I$.
Therefore, each class is setup on $t$ of the $m$ machines.
For a fixed machine $i$ and a fixed class $k$, the probability that $i$ is setup for $k$ is consequently $t/m$ since $\pi_k(i)$ is chosen uniformly at random.
Also, the probability that $i$ is setup for all classes of a fixed subset of $r$ classes is $(t/m)^r$ as the $\pi_k$ are chosen independently.
Therefore, the probability that a fixed machine $i$ is setup for at least $r$ classes is upper bounded by
\[
\binom{K}{r} \left(\frac{t}{m}\right)^r \leq \left(\frac{Ket}{rm}\right)^r \enspace .
\]
Hence, for the probability that there is some machine which is setup for at least $r\coloneqq 2Ket/m + 2\log m = O(\frac{K}{m} \cdot t)$ classes is (for $m \geq 2$) upper bounded by 
\[
m \cdot \left(\frac{Ket}{rm}\right)^r \leq m \cdot \left(\frac{1}{2}\right)^{2\log m} \leq \frac{1}{m} \leq \frac{1}{2} \enspace .
\]
Therefore, $I'$ has a makespan of at most $O(\frac{K}{m} \cdot t)$ with probability at least $1/2$.

Also note that $\log (n) = \log (K \cdot N) \leq \log(m \log m \cdot N) = O(\log N)$, where the last equality holds due to the polynomial relation between $m$ and $N$ according to \cref{le:setCoverHard}.
This concludes the proof.
\end{proof}

\subsection{Special Cases with Constant Approximations}
\label{sec:constantCases}
In this section, we identify and approximate two special cases of unrelated machines, for which constant factor approximations are possible. 
Both cases require classes to have certain structural properties that make the reduction and hence, the inapproximability from \cref{sec:unrelatedReduction} invalid:
Either we consider the restricted assignment case with the additional assumption that the set of eligible machines is the same for all jobs of a class, or we assume that, on each machine, all jobs of a given class have the same processing times.

\subsubsection{Restricted Assignment with Class-uniform Restrictions}
\label{sec:unrelatedRestrictedAssignment}
Although even the restricted assignment variant of our scheduling problem cannot be approximated with a factor of $o(\log n)$ as shown in \cref{th:inapprox}, in this section we will see that the following special case admits a much better approximation factor. 
Let the \emph{restricted assignment problem with class-uniform restrictions} be defined as the restricted assignment problem with the additional constraint that for all $j, j' \in \mathcal{J}$ with $k_j = k_{j'}$ it holds $M_j = M_{j'}$.
That is, all jobs of a class $k$ have the same set of eligible machines and by abuse of notation call this set $M_k$.

Note that we can add the following valid constraints given by \cref{in:classMakespan,in:infeasibleJob,in:infeasibleClass} to \texttt{ILP-UM}: 
\begin{align}
        \sum_{j : k_j = k} x_{ij} p_{ij} + y_{ik} s_{ik} &\leq y_{ik} T & \forall i \in \mathcal{M}, \forall k \in \mathcal{K} \label{in:classMakespan}\\
	x_{ij}&=0 & \forall i \in \mathcal{M}, j \in \mathcal{J} : p_{ij} + s_{ik_j}> T \label{in:infeasibleJob}\\
	y_{ik}&=0 & \forall i \in \mathcal{M}, k \in \mathcal{K} : s_{ik} > T \label{in:infeasibleClass}
\end{align}
\cref{in:classMakespan} holds because a job of a class $k$ can only by processed on a machine $i$ if this machine is setup for class $k$.
Additionally, \cref{in:infeasibleJob,in:infeasibleClass} avoid the assignment of jobs to machines where the setup or the job's processing time is too large.
Let \texttt{ILP-RA} denote the program given by \cref{in:makespan,in:fullyAssigned,in:integral,in:setup,in:classMakespan,in:infeasibleJob,in:infeasibleClass}.
Unfortunately, we do not know how to round a solution to the linear relaxation of \texttt{ILP-RA} to a good approximation for our problem.
However, instead we formulate a different, relaxed linear program \texttt{LP-RelaxedRA}, which we will utilize for our approximation algorithm:
	\begin{align}
	\sum_{k \in \mathcal{K}} \bar x_{ik}(\bar p_{ik} + \alpha_{ik} s_{ik}) & \leq T & \forall i \in \mathcal{M} \label{in:splittableMakespan}\\
	\sum_{i \in \mathcal{M}} \bar x_{ik} &= 1 & \forall k \in \mathcal{K}\label{in:wholeClass}\\
	\bar x_{ik} & \geq 0 &\forall i \in \mathcal{M},k \in \mathcal{K}\label{in:nonNegative}\\
	\bar x_{ik}&=0 & \forall i \in \mathcal{M}, k \in \mathcal{K}: s_{ik}> T \label{in:infeasibleClass2}
	\end{align}
This linear program takes a different view in the sense that it does not operate on the level of jobs but instead it has a variable $\bar x_{ik}$ for each class-machine-pair determining the fraction of (the workload of) class $k$ processed on machine $i$.
Therefore, let $\bar p_{ik} \coloneqq \sum_{j: k_j = k} p_{ij}$ be the overall workload of class $k$ if its jobs can be processed on machine $i$, and $\bar p_{ik} = \infty$ otherwise. 
Also, let $\alpha_{ik} \coloneqq \max\left\{1, \frac{\bar p_{ik}}{T-s_{ik}}\right\}$.
If $x$ is a feasible solution to \texttt{ILP-RA}, 
then $\bar x$ with $\bar x_{ik} \coloneqq \sum_{j: k_j=k} x_{ij}\frac{p_{ij}}{\bar p_{ik}}$ is a feasible solution to \texttt{LP-RelaxedRA} as the next lemma proves.
\begin{lemma}
Let $x$ be a feasible solution to \texttt{ILP-RA}.
Then $\bar x$ is a feasible solution to \texttt{LP-RelaxedRA}.
\end{lemma}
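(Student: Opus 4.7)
The plan is to verify the four constraints \eqref{in:splittableMakespan}--\eqref{in:infeasibleClass2} of \texttt{LP-RelaxedRA} one by one. Non-negativity \eqref{in:nonNegative} is immediate from $x_{ij}\geq 0$ and $p_{ij},\bar p_{ik}\geq 0$.

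For the normalization \eqref{in:wholeClass}, the key observation is that class-uniform restrictions force $\bar p_{ik}$ to take the same finite value $P_k:=\sum_{j:k_j=k}p_j$ on every eligible machine $i\in M_k$, while \eqref{in:tooLarge} forces $x_{ij}=0$ on ineligible machines for every job $j$ of class $k$. Exchanging the order of summation then yields $\sum_{i}\bar x_{ik}=\tfrac{1}{P_k}\sum_{j:k_j=k}p_j\sum_i x_{ij}=1$ by \eqref{in:fullyAssigned}. Constraint \eqref{in:infeasibleClass2} follows because \eqref{in:infeasibleClass} forces $y_{ik}=0$ when $s_{ik}>T$, and then \eqref{in:setup} forces $x_{ij}=0$ for every job of class $k$, so $\bar x_{ik}=0$.

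The main work is the makespan constraint \eqref{in:splittableMakespan}. By construction, $\bar x_{ik}\bar p_{ik}=\sum_{j:k_j=k}x_{ij}p_{ij}$, so summing over $k$ reproduces the total job load $\sum_j x_{ij}p_{ij}$ on machine $i$. In view of \eqref{in:makespan}, it therefore suffices to prove the per-class inequality $\bar x_{ik}\,\alpha_{ik}\leq y_{ik}$; once this holds, multiplying by $s_{ik}$ and summing matches the $\sum_k y_{ik}s_{ik}$ term in \eqref{in:makespan}. I would establish this by splitting on the two cases of the maximum defining $\alpha_{ik}$: when $\alpha_{ik}=1$, use $x_{ij}\leq y_{ik}$ from \eqref{in:setup} to bound $\bar x_{ik}=\sum_{j:k_j=k}x_{ij}p_{ij}/\bar p_{ik}\leq y_{ik}$ directly; when $\alpha_{ik}=\bar p_{ik}/(T-s_{ik})$, rewrite the left-hand side as $\sum_{j:k_j=k}x_{ij}p_{ij}/(T-s_{ik})$ and apply the per-class makespan inequality \eqref{in:classMakespan}, which gives exactly $y_{ik}$.

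The main obstacle is really just the second case of this per-class inequality: it is precisely the case that motivates both the definition of $\alpha_{ik}$ and the inclusion of \eqref{in:classMakespan} in \texttt{ILP-RA}, since the aggregated bound in \texttt{LP-RelaxedRA} is underwritten by \eqref{in:makespan} and \eqref{in:classMakespan} jointly rather than by either constraint alone. Degenerate situations ($y_{ik}=0$, $s_{ik}=T$, or $P_k=0$) require a brief sanity check but each of them forces $\bar x_{ik}=0$ via \eqref{in:setup}, \eqref{in:infeasibleClass}, or a trivial choice, so they cause no trouble.
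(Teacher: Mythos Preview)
Your proposal is correct and follows essentially the same approach as the paper. The only cosmetic difference is that the paper packages your two-case argument for $\bar x_{ik}\alpha_{ik}\le y_{ik}$ into a single auxiliary inequality \[\sum_{j}x_{ij}p_{ij}+\sum_{k}\max\Bigl\{\max_{j:k_j=k}x_{ij},\ \tfrac{\sum_{j:k_j=k}x_{ij}p_{ij}}{T-s_{ik}}\Bigr\}s_{ik}\le T,\] derived from \eqref{in:makespan}, \eqref{in:setup}, and \eqref{in:classMakespan}, and then bounds the left-hand side of \eqref{in:splittableMakespan} against it; your explicit case split on the definition of $\alpha_{ik}$ unpacks exactly this.
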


\begin{proof}
First, note that \cref{in:infeasibleClass2} directly follows from \cref{in:infeasibleJob,in:infeasibleClass}.
\cref{in:wholeClass} is satisfied as we have 
\begin{align*}
\sum_{i \in \mathcal{M}} \bar x_{ik}  = \sum_{i \in \mathcal{M}} \sum_{j: k_j=k} x_{ij}\frac{p_{ij}}{\bar p_{ik}}
 &= \sum_{i \in M_k}\frac{1}{\bar p_{ik}} \sum_{j: k_j=k} x_{ij}p_{ij}\\
 &=\frac{1}{\bar p_{k}}  \sum_{j: k_j=k} p_{j} \sum_{i \in M_k}  x_{ij} = 1,
\end{align*}
where the first equality follows by definition of $\bar x_{ik}$, the second because $\bar x_{ik} = 0$ if $i \notin M_k$ and the last one because $\sum_{i \in M_k} x_{ij} = 1$ by \cref{in:fullyAssigned}.

To see why \cref{in:splittableMakespan} holds, first observe that for $x$ we have 
\begin{equation}
\sum_{j \in \mathcal{J}} x_{ij}p_{ij} + \sum_{k \in \mathcal{K}} \max\left\{\max_{j: k_j = k} x_{ij}, \frac{\sum_{j:k_j=k}x_{ij} p_{ij}}{T-s_{ik}}\right\} s_{ik}  \leq T \ \forall i \in \mathcal{M} \label{in:alphaStyle}
\end{equation}
due to \cref{in:makespan,in:setup,in:classMakespan}.
Then we have 
\begin{align*}
\sum_{k \in \mathcal{K}} \bar x_{ik}(\bar p_{ik} + \alpha_{ik} s_{ik}) &= \sum_{k \in \mathcal{K}} \sum_{j: k_j=k} x_{ij}\frac{p_{ij}}{\bar p_{ik}} \sum_{j: k_j =k} p_{ij} + \sum_{k \in \mathcal{K}} \sum_{j: k_j=k} x_{ij}\frac{p_{ij}}{\bar p_{ik}} \alpha_{ik}s_{ik}\\
& = \sum_{j \in \mathcal{J}} x_{ij} p_{ij} + \sum_{k \in \mathcal{K}} \frac{\alpha_{ik}}{\bar p_{ik}} \sum_{j: k_j=k} x_{ij} p_{ij} s_{ik} \leq T,
\end{align*}
where the first and second equality follow from definition of $\bar x_{ik}$ and $\bar p_{ik}$ respectively, and the last inequality holds due to \cref{in:alphaStyle}.
\end{proof}

\texttt{LP-RelaxedRA} is identical to the LP given in \cite{correa15}.
There it is shown that an extreme solution to the LP can be rounded to a solution with makespan at most $2T$ that is feasible for the problem where jobs can be split arbitrarily but each part requires a (job-dependent) setup. 
Interestingly, even though in our model setups are associated with classes and even more crucial, we do not allow jobs to be split, the (essentially) same approach they use, provides an approximation factor of $2$ for our problem, too.
The high-level idea how to obtain a $2$-approximation based on an optimal (extreme) solution for \texttt{LP-RelaxedRA} is as follows:
It is known that due to the structure of \texttt{LP-RelaxedRA}, the graph representing the solution is a pseudo-forest.
We can exploit this fact to modify the solution such that it has a makespan of at most $2T$, but in which (additionally) each machine processes at most one class partly (but not completely) and in which, for each class $k$, the property holds that from the set of machines processing parts of $k$ at most one machine has a load larger than $T$.
This allows us to greedily assign the actual jobs according to the (modified) fractional solution to the machines and thereby increasing the load per machine (with load at most $T$) by at most one setup plus one job of the same class and hence, by at most $T$.
The details are given next and for the sake of completeness, we restate the rounding procedure together with its properties from \cite{correa15}:
Given an extreme solution $\bar x^*$ to \texttt{LP-RelaxedRA}, all variables $\bar x_{ik}$ with $\bar x^*_{ik} \in \{0,1\}$ will remain unchanged, are excluded from our further considerations and class $k$ is processed on machine $i$ if $\bar x^*_{ik} = 1$.
Let $G = (V,E)$ be the bipartite graph on node set $V = \mathcal{K} \setminus \{k : \exists i \text{ with } \bar x^*_{ik} =1\} \cup \mathcal{M}$ and edge set $E = \{\{i,k\} : 0<\bar x^*_{ik}<1\}$.
$G$ forms a graph in which each connected component is a pseudotree. 
For the sake of rounding, we now construct a subset $\tilde E \subseteq E$ of edges as follows: 
For each connected component, let $C$ be the unique cycle (or an arbitrary path if no cycle exists) and let $J(C)$ be the nodes in $C$ corresponding to classes.
Fix an arbitrary direction along $C$ and starting at an arbitrary node $v \in J(C)$, remove each second edge along $C$ starting with the edge leaving $v$.
We then end up with a graph only consisting of trees.
In the last step, for each class $k \in J(C)$ belonging to $C$ we build a directed tree rooted in $k$ by directing edges away from the root.
Then we remove all edges leaving machine nodes. 
All edges that remain after these two steps belong to $\tilde E$.

It is not too hard to see, and it is formally proven in \cite{correa15}, that we have the following lemma.
\begin{lemma}[\cite{correa15}]
\label{le:correaLemma}
By the construction described above, we have the following two properties for a schedule induced by $\tilde E$:
\begin{enumerate}
    \item Each machine $i$ processes at most one class $k$ with $\{i,k\} \in \tilde E$, and 
    \item for each class $k$ there is at most one machine $i$ such that $\{i,k\} \notin \tilde E$ (and $\bar x^*_{ik} > 0$).
\end{enumerate}
\end{lemma}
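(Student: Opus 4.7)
The plan is to analyze the combinatorial effect of the two-stage cleaning step (cycle-breaking followed by rooting/orientation) on the bipartite pseudoforest $G$, tracking separately how many edges incident to a machine node and how many edges incident to a class node survive in $\tilde E$. Since $G$ is bipartite with classes and machines alternating, and each connected component is a pseudotree, the construction turns every component into one or more out-directed trees rooted at class nodes, and each claim is then just a statement about the direction and endpoints of the removed edges.

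For Property~1, I would fix a machine node $i$ and observe that, after cycle-breaking, $i$ lies in some tree that is rooted at a class node (roots are always chosen from $J(C)$ when a cycle exists, and from the endpoints of the chosen path otherwise; in particular, machines are never roots). In the orientation step, edges are directed away from the root, so $i$ has exactly one incoming edge (its parent edge from a class) and the remaining edges at $i$ are all outgoing. Since the final step deletes precisely the edges leaving machine nodes, only the parent edge survives, giving at most one edge $\{i,k\} \in \tilde E$ incident to $i$.

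For Property~2, I would fix a class node $k$ with $\bar x_{ik}^* \in (0,1)$ for some $i$ and count in how many ways an edge at $k$ can fail to enter $\tilde E$. An edge at $k$ can be removed in exactly two situations: (a) it is one of the alternating edges discarded along the cycle/path $C$, which can happen only if $k \in J(C)$, and even then removes exactly one of $k$'s two edges on $C$; or (b) it is the incoming (parent) edge of $k$ after orientation, which can happen only when $k$ is a non-root node. The key structural fact (which forces the construction to be well-defined) is that when $k \in J(C)$, the rooting is set up so that $k$ ends up as the root of its own subtree, and hence has no parent edge removed; conversely, when $k$ is a non-root class, $k \notin J(C)$ and no cycle edge at $k$ is removed. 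Thus exactly one of (a) and (b) can occur for any single $k$, and at most one edge at $k$ is absent from $\tilde E$.

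The main technical point to be careful about is the case analysis for components without cycles, where the paper replaces $C$ by an arbitrary path: here one has to verify that the endpoints/roots can still be chosen as class nodes so that the dichotomy in the previous paragraph persists (any class that loses an edge on the path is a root afterwards, and any class that is not a root did not lie on the path). Once this compatibility between the choice of $C$ (or the path) and the choice of root is checked, both properties follow directly from the edge-counting above, matching the statement of \cite{correa15}.
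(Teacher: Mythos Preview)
The paper does not supply its own proof of this lemma; it simply states that the result ``is not too hard to see, and it is formally proven in \cite{correa15}.'' Your reconstruction is correct and is essentially the argument one finds in \cite{correa15}: bipartiteness forces each resulting tree to be rooted at a class node, so every machine keeps only its parent edge (Property~1), and for a class node exactly one of ``lost a cycle edge and became a root'' or ``is a non-root and lost its parent edge to a machine'' applies (Property~2). Your flag about the tree/path case is apt---the paper's description there is loose, and in \cite{correa15} the path (and the starting vertex) is indeed chosen so that each resulting subtree contains exactly one class from $J(C)$; once that is arranged, your dichotomy goes through verbatim.
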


For each class $k$ we choose an arbitrary machine $i^+_k$ such that $\{i^+_k,k\} \in \tilde E$.
If there is a machine $i^-_k$ such that $\bar x^*_{i^-_kk}>0$ but $\{i^-_k,k\} \notin \tilde E$, we move all workload of $k$ processed on $i^-_k$ from $i^-_k$ to $i^+_k$ and add a (full) setup for $k$ to machine $i_k^+$.
By this and \cref{le:correaLemma} we then have the property that each machine processes at most one class fractionally.
Let $M(k)$ be the set of machines that process (parts of) class $k$.
We next prove the following lemma.

\begin{lemma}
For any $k \in \mathcal{K}$, all machines in $M(k) \setminus \{i^+_k\}$ have a load of at most $T$.
The load of $i^+_k$ is upper bounded by $2T$.
\end{lemma}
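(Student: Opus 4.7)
The strategy is to express each machine's post-modification load as a sum of class-specific contributions and bound each using a single term of the LP constraint together with the definition of $\alpha_{ik}$.

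As a first step, I would establish the single-term inequality $\bar{x}^*_{ik}\bar{p}_{ik} + s_{ik} \leq T$ for any $i, k$ with $\bar{x}^*_{ik} > 0$. Since the term $\bar{x}^*_{ik}(\bar{p}_{ik} + \alpha_{ik} s_{ik})$ is one non-negative summand of the LP constraint~(\ref{in:splittableMakespan}) on $i$, it is already bounded by $T$. A short case analysis on $\alpha_{ik} = \max\{1, \bar{p}_{ik}/(T - s_{ik})\}$ then yields the claim: when $\alpha_{ik} > 1$, the LP term simplifies to $\bar{x}^*_{ik}\bar{p}_{ik} \cdot T/(T - s_{ik}) \leq T$, giving $\bar{x}^*_{ik}\bar{p}_{ik} \leq T - s_{ik}$; when $\alpha_{ik} = 1$, it suffices to combine $\bar{p}_{ik} + s_{ik} \leq T$ with $\bar{x}^*_{ik} \leq 1$.

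For a machine $i \in M(k) \setminus \{i^+_k\}$, \cref{le:correaLemma} guarantees that $k$ is the only class on $i$ with $\{i, k\} \in \tilde{E}$, so the actual load decomposes as $L_i = \sum_{k' : \bar{x}^*_{ik'}=1}(\bar{p}_{ik'} + s_{ik'}) + \bar{x}^*_{ik}\bar{p}_{ik} + s_{ik}$. For each fully-assigned class $k'$ on $i$, I would use $\alpha_{ik'} \geq 1$ to dominate the actual contribution $\bar{p}_{ik'} + s_{ik'}$ by the LP contribution $\bar{p}_{ik'} + \alpha_{ik'}s_{ik'}$, and then combine the machine-wide LP constraint with the single-term inequality for class $k$ to conclude $L_i \leq T$. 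For the absorbing machine $i = i^+_k$, the post-modification load equals the pre-modification load on $i$ plus the workload $\bar{x}^*_{i^-_k k}\bar{p}_k$ transferred from $i^-_k$, since the setup for $k$ on $i$ was already accounted for before the modification. Applying the single-term inequality at $i^-_k$ gives $\bar{x}^*_{i^-_k k}\bar{p}_k \leq T - s_k \leq T$, and the previous bound applied to $i$ (ignoring its $i^+_k$ role) controls its pre-modification load by $T$, yielding $L_{i^+_k} \leq 2T$.

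I expect the main obstacle to be the non-$i^+_k$ bound in the regime $\bar{x}^*_{ik}\alpha_{ik} < 1$: here the integer setup $s_{ik}$ exceeds the LP's fractional charge $\bar{x}^*_{ik}\alpha_{ik}s_{ik}$ for class $k$'s share by $(1 - \bar{x}^*_{ik}\alpha_{ik})s_{ik}$, so a naive term-by-term domination yields only $L_i \leq T + s_{ik}(1 - \bar{x}^*_{ik}\alpha_{ik})$. Closing this gap requires carefully accounting for the LP slack on $i$---produced by the $\alpha$-inflation on the fully-assigned classes and, crucially, by the disappearance of the other fractional classes from $i$ during the modification (those $k'$ with $\{i,k'\} \in G \setminus \tilde{E}$ for which $i = i^-_{k'}$)---to absorb the excess setup charge.
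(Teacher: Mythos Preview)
Your single-term inequality $\bar x^*_{ik}\bar p_{ik}+s_{ik}\le T$ is correct and is exactly what the paper uses for the $i^+_k$ bound (together with the class-uniform restriction to transfer it from $i^-_k$ to $i^+_k$). The problem lies in the first part: you are reading ``load'' as already including a full setup $s_{ik}$ for the fractional class $k$ on every machine $i\in M(k)\setminus\{i^+_k\}$. In the paper's flow this setup has \emph{not} been added yet; the sentence right after the lemma explicitly says that obtaining a feasible schedule ``requires adding a (full) setup for class $k$ on all machines $i\in M(k)\setminus\{i^+_k\}$'', and this extra $s_{ik}$ together with one overflowing job is what later pushes those machines from $T$ up to at most $2T$. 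With your reading the first statement is simply false: take a machine $i$ carrying one fully assigned class $k'$ with $\bar p_{k'}=0.8T$, $s_{k'}=0.01T$ (so $\alpha_{ik'}=1$) and one fractional class $k$ with $\bar p_k=s_k=0.5T$, $\alpha_{ik}=1$, $\bar x^*_{ik}=0.1$; the LP constraint on $i$ reads $0.81T+0.1T\le T$, but your ``actual load'' is $0.81T+0.05T+0.5T=1.36T$. Here both of your proposed slack sources vanish ($\alpha_{ik'}=1$, and $i$ need not be $i^-_{k''}$ for any $k''$), so the obstacle you anticipate is genuine and cannot be closed.

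Under the intended meaning the first statement is almost immediate and this is what the paper does: if $i\in M(k)\setminus\{i^+_k\}$ then $\{i,k\}\in\tilde E$, and by \cref{le:correaLemma} this is the unique $\tilde E$-edge at $i$; hence $i\neq i^+_{k'}$ for every class $k'$, so the modification step never adds anything to $i$ (it may only remove the contributions of classes $k''$ with $i=i^-_{k''}$). The load of $i$ is therefore bounded by the LP left-hand side, which is at most $T$. Your argument for $i^+_k$ then goes through once you charge the freshly added setup $s_{i^+_k k}$ to the moved part (i.e.\ bound the increment by $\bar x^*_{i^-_k k}\bar p_k+s_k\le T$ via your single-term inequality), rather than assuming it was already present.
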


\begin{proof}
Consider a class $k$.
Note that $i^+_k \neq i^+_{k'}$ for all $k'\neq k$ by \cref{le:correaLemma}.
Hence, the first statement of the lemma holds.
The second statement follows by an observation already made in \cite{correa15}: 
By the constraints of \texttt{LP-RelaxedRA}, $\alpha_{i_k^-k} \bar x^*_{i_k^-k} \leq 1$ and by definition of $\alpha_{i_k^-k}$, we have $T \geq \bar p_{i_k^-k}/\alpha_{i_k^-k} + s_{i_k^-k}$.
Taken together, $\bar x^*_{i_k^-k} \bar p_{i_k^-k} + s_{i_k^-k} \leq T$ and because we consider restricted assignment with class-uniform restrictions, we also have
$\bar x^*_{i_k^-k} \bar p_{i_k^+k} + s_{i_k^+k} \leq T$, proving the lemma.
\end{proof}

Finally, we need to explain how to obtain the final feasible schedule with makespan at most $2T$. 
Obtaining a feasible schedule from the solution so far, requires adding a (full) setup for class $k$ on all machines $i \in M(k) \setminus \{i^+_k\}$ as well as showing how to actually assign the jobs of $k$ to the machines $i \in M(k)$.
We say that a time slot of size $x$ is reserved for class $k$ on a machine $i$ if $\bar x^*_{ik}\bar p_{ik} = x$. 
For any fixed class $k$, sort the machines in $M(k)$ so that machine $i_k^+$ comes last in this ordering.
Starting with the first machine in the ordering, take the jobs of $k$ and greedily fill them into the reserved time slots by assigning the current job to the current machine if the reserved time slot is not yet full. 
As soon as a machine is full, proceed with the next machine.
It is not hard to see that by this procedure the load of each machine $i \in M(k) \setminus \{i^+_k\}$ is increased by an additive of at most $s_{ik} + \max_{j : k_j =k} p_{ij} \leq T$ and the last machine $i_k^+$ keeps its load of at most $2T$.
Therefore, we have proven the desired result.

\begin{theorem}
\label{th:raResult}
The restricted assignment problem with class-uniform restrictions admits a $2$-approximation.
\end{theorem}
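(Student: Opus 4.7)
The plan is to combine the LP-rounding machinery developed above with the dual approximation framework. Given a guess $T$ for the makespan, I would first compute an extreme-point optimal solution $\bar x^*$ of \texttt{LP-RelaxedRA}. By the preceding lemma, any feasible schedule of makespan $T$ for \texttt{ILP-RA} yields a feasible solution of \texttt{LP-RelaxedRA} with the same $T$, so infeasibility of the LP correctly rules $T$ out in the binary search, and we lose nothing by rounding a fractional solution instead of the ILP.

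Next, I would apply the rounding of Correa et al.\ spelled out above: build the bipartite support graph on the ``half-integral'' class-machine pairs with $\bar x^*_{ik}\in(0,1)$, which is a pseudo-forest; break the unique cycle (or an arbitrary path) of each component, root the resulting tree at a class node of that cycle, and delete edges leaving machine nodes to obtain the edge set $\tilde E$. By \cref{le:correaLemma}, each machine is incident to at most one class in $\tilde E$, and for each class $k$ there is at most one ``bad'' neighbor $i_k^-$ with $\bar x^*_{i_k^- k}>0$ but $\{i_k^-,k\}\notin\tilde E$. I would then choose some $i_k^+$ with $\{i_k^+,k\}\in\tilde E$, migrate the fractional workload $\bar x^*_{i_k^- k}\bar p_{i_k^- k}$ of $k$ from $i_k^-$ to $i_k^+$, and pay one full setup for $k$ on $i_k^+$.

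To bound the loads, I would invoke the lemma established above together with \cref{in:splittableMakespan}: every machine in $M(k)\setminus\{i_k^+\}$ carries at most $T$, whereas $i_k^+$ inherits the transferred load $\bar x^*_{i_k^- k}\bar p_{i_k^- k}+s_{i_k^- k}\le T$ on top of its own $\le T$. The key subtlety, and in my view the main obstacle of the argument, is precisely this migration step: it needs the class-uniform restriction and restricted-assignment structure to guarantee $\bar p_{i_k^+ k}=\bar p_{i_k^- k}$ and $s_{i_k^+ k}=s_{i_k^- k}$, so that the transferred cost on $i_k^+$ is the same as on $i_k^-$; without class-uniformity one cannot reroute the fractional workload without blow-up, which is exactly why the general unrelated case from \cref{sec:unrelatedReduction} admits the $\Omega(\log n+\log m)$ lower bound.

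Finally, I would turn the (now conveniently structured) fractional solution into a genuine schedule. For each class $k$, order $M(k)$ with $i_k^+$ last and greedily pack the actual jobs of $k$ into the reserved slots of size $\bar x^*_{ik}\bar p_{ik}$, moving to the next machine once the current slot overflows. Because jobs are not split, each non-special machine absorbs at most one extra job beyond its reserved load, plus the single setup already paid for, so its total load grows by at most $\max_{j:k_j=k} p_{ij}+s_{ik}\le T$; the special machine $i_k^+$ already has its $2T$ bound accounted for. Summing over classes on each machine is unnecessary since \cref{le:correaLemma} ensures each machine plays the ``fractional'' role for at most one class, and the claimed $2T$ makespan follows, finishing the proof of the $2$-approximation.
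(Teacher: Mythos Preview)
Your proposal is correct and follows essentially the same route as the paper: dual approximation with guess $T$, solve \texttt{LP-RelaxedRA} at an extreme point, apply the Correa et al.\ pseudo-forest rounding to obtain $\tilde E$ with the properties of \cref{le:correaLemma}, migrate the load of the unique ``bad'' machine $i_k^-$ to some $i_k^+$ using the class-uniform restricted-assignment structure to keep the transferred cost bounded by $T$, and finish by greedily packing the actual jobs into the reserved slots with $i_k^+$ last. The only minor imprecision is the phrase ``plus the single setup already paid for''---the full setup on the non-special machines is in fact added in the final step, not inherited from the LP---but your stated bound $\max_{j:k_j=k} p_{ij}+s_{ik}\le T$ on the additive increase is exactly what the paper uses, so the argument goes through.
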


\subsubsection{Unrelated Machines with Class-uniform Processing Times}
\label{sec:specialUnrelated}
A second special case that allows constant factor approximations is the one of unrelated machines in which all jobs of a given class have the same processing times on any machine.
That is, for all $i\in \mathcal{M}$ and $j,j' \in \mathcal{J}$ it holds $k_j = k_{j'}$ implies $p_{ij} = p_{ij'}$.

We solve this problem similarly to the restricted assignment problem with class-uniform restrictions in the previous section.
To do so, we modify the approach as follows:
First of all, we replace \cref{in:infeasibleClass2} in \texttt{LP-RelaxedRA} by 
    \begin{equation}
    \bar x_{ik} = 0 \enspace \forall i \in \mathcal{M},k \in \mathcal{K}: s_{ik} + p_{ij} > T \text{ for some } j \text{ with } k_j =k \enspace . \label{in:specialUnrelatedSizeBound}
    \end{equation}
(Note that this is a valid constraint since all jobs of a class $k$ have the same size on machine $i$ and if a job together with its class' setup does not fit to a machine, no workload of class $k$ will be assigned to $i$ at all.)
Then we construct the set $\tilde E$ with the properties of \cref{le:correaLemma} as before.
Now, for each class $k \in \mathcal{K}$ let $i^-_k$ be the machine such that $\bar x_{{i^-_k}k}^* > 0$ but $\{i^-_k,k\} \notin \tilde E$ (if it exists).
Let $i^+_{k,\iota}$, $\iota = 1, \ldots, \iota_k$ be the machines such that $\bar x_{{i^+_{k,\iota}}k}^* > 0$ and $\{i^+_{k,\iota},k\} \in \tilde E$.
In case $\bar x_{{i^-_k}k}^* > \frac{1}{2}$, process the entire class $k$ on machine $i^-_k$.
Otherwise, distribute the amount of $k$ processed on $i^-_k$ proportionally to the machines $i^+_{k,\iota}$.
That is, set $\bar x_{{i^-_k}k}^* = 0$ and $\bar x_{{i^+_{k,\iota}}k}^* = 2 \bar x_{{i^+_{k,\iota}}k}^*$.
After these steps, the load on each machine is at most $2T$. 
Finally, it remains to add at most one setup to each machine and, as before, to greedily fill the reserved slots by the actual jobs.
This increases the load on each machine by an additive of at most $T$ due to \cref{in:specialUnrelatedSizeBound} and hence, we have constructed a $3$-approximation.
Together with a straightforward adaptation of the reduction given in \cite{correa15}, we have the following result.
    
\begin{theorem}
The unrelated machines case with class-uniform processing times admits a $3$-approximation.
It cannot be approximated to within a factor less than $2$ unless $P = NP$.
\end{theorem}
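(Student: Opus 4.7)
The plan is to reuse the LP-rounding framework of \cref{sec:unrelatedRestrictedAssignment}, but with the stronger feasibility constraint \cref{in:specialUnrelatedSizeBound} in place of \cref{in:infeasibleClass2}. This constraint is valid under class-uniform processing times: once any job of class $k$ is (even fractionally) placed on machine $i$, every such job has the same size $p_{ij}$ on $i$, and so $s_{ik}+p_{ij}>T$ would make it impossible to execute a single job of class $k$ on $i$. I would first compute an extreme optimal solution $\bar x^*$ of the modified \texttt{LP-RelaxedRA} and then construct the edge set $\tilde E$ via the pseudoforest rounding of \cite{correa15} so that \cref{le:correaLemma} still applies.

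For each class $k$ let $i_k^-$ (if it exists) denote the unique machine with $\bar x^*_{i_k^-k}>0$ but $\{i_k^-,k\}\notin\tilde E$, and let $i^+_{k,1},\dots,i^+_{k,\iota_k}$ be the remaining machines fractionally carrying $k$. Unlike in the restricted-assignment case, $\bar p_{ik}$ depends on $i$, so naively migrating the work on $i_k^-$ to one specific $i^+_{k,\iota}$ could inflate the resulting load. I therefore distinguish two cases. If $\bar x^*_{i_k^-k}>1/2$, I consolidate class $k$ entirely on $i_k^-$: using $\alpha_{i_k^-k}\bar x^*_{i_k^-k}\leq 1$ gives $\bar x^*_{i_k^-k}\bar p_{i_k^-k}+s_{i_k^-k}\leq T$ as in \cref{sec:unrelatedRestrictedAssignment}, and consolidation at most doubles the processing contribution, keeping the resulting load below $2T$. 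Otherwise I double each $\bar x^*_{i^+_{k,\iota} k}$, so that $\sum_\iota 2\bar x^*_{i^+_{k,\iota} k}\geq 1$ and the class is fully covered; on every $i^+_{k,\iota}$ this at most doubles the LP processing contribution of the unique partial class there, while the setup of that class was already charged by its $\alpha$-inflated LP term. Either way, the reserved load per machine stays at most $2T$.

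It then remains to install one setup per machine for the (at most one) class it partially carries and to realize the fractional assignment by actual jobs. I would process classes one at a time by greedily packing their equally sized jobs into the reserved slots, switching to the next machine as soon as one would overflow. The overflow per machine is at most one job of size $p_{ik}$, so together with the added setup $s_{ik}$ the load increases by at most $p_{ik}+s_{ik}\leq T$ by \cref{in:specialUnrelatedSizeBound}, yielding a total makespan of $3T$. Plugging this into the dual approximation framework then provides the $3$-approximation.

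For the matching $2$-inapproximability, I would adapt the reduction of \cite{correa15} for the closely related splittable-jobs-with-setups variant, verifying that the constructed gadgets assign equally sized jobs within each class on each machine and so fit our setting. The main obstacle of the upper bound is the second redistribution case: one must carefully verify that doubling $\bar x^*$ only on the single partial class of each $i^+_{k,\iota}$ does not violate the $2T$ budget, given that the LP accounts for $\alpha$-inflated setups rather than plain setup times; the argument uses that the setup term is dominated by its LP contribution while only the processing term is doubled.
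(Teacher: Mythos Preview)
Your proposal is correct and essentially identical to the paper's own argument: the paper likewise replaces \cref{in:infeasibleClass2} by \cref{in:specialUnrelatedSizeBound}, builds $\tilde E$, splits on whether $\bar x^*_{i_k^- k}>1/2$ (consolidating on $i_k^-$ in the first case and doubling the $\bar x^*_{i^+_{k,\iota} k}$ in the second), observes a $2T$ bound, and finishes with one setup plus one greedy overflow job bounded by \cref{in:specialUnrelatedSizeBound}; the $2$-hardness is also obtained there by adapting the reduction of \cite{correa15}. Your justification of the $2T$ bound could be sharpened slightly by noting that in both cases the full LP term $\bar x^*_{ik}(\bar p_{ik}+\alpha_{ik}s_{ik})$ (not just the processing part) is at most doubled, which immediately covers the setups of all classes that end up fully on a machine, but this is a presentational point rather than a gap.
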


\bibliography{references}
\bibliographystyle{plain}
\end{document}